\newtheorem{theorem}{Theorem} 
\newtheorem{algorithm}{Algorithm}
\newtheorem{lemma}{Lemma}
\newtheorem{remark}{Remark}
\newtheorem{example}{Example}
\newtheorem{definition}{Definition} 
\newtheorem{assumption}{Assumption}
\DeclareMathOperator{\diag}{diag}
\title{\LARGE \bf
Online Learning for Nonlinear Dynamical Systems \\without the I.I.D. Condition}
\author{Lantian Zhang$^{1}$ and Silun Zhang$^{2}$
\thanks{*This work is supported by The Wallenberg AI, Autonomous Systems and Software Program (WASP) which is funded by the Knut and Alice Wallenberg Foundation.}
\thanks{$^{1}$ Lantian Zhang is with the Department of Mathematics, KTH Royal Institute of Technology, 100 44 Stockholm, Sweden. {\tt\small lantian@kth.se}}%
\thanks{$^{2}$ Silun Zhang is with the Department of Mathematics, KTH Royal Institute of Technology, 100 44 Stockholm, Sweden. {\tt\small silunz@kth.se}}%
}
\begin{document}

\maketitle
\thispagestyle{empty}
\pagestyle{empty}

\begin{abstract}
This paper investigates online identification and prediction for nonlinear stochastic dynamical systems. 
In contrast to offline learning methods, we develop online algorithms that learn unknown parameters from a single trajectory. A key challenge in this setting is handling the non-independent data generated by the closed-loop system. Existing theoretical guarantees for such systems are mostly restricted to the assumption that inputs are independently and identically distributed (i.i.d.), or that the closed-loop data satisfy a persistent excitation (PE)  condition. However, these assumptions are often violated in applications such as adaptive feedback control. 
In this paper, we propose an online projected Newton-type algorithm for parameter estimation in nonlinear stochastic dynamical systems, and develop an online predictor for system outputs based on online parameter estimates. By using both the stochastic Lyapunov function and martingale estimation
methods, we demonstrate that the average regret converges to zero without requiring traditional persistent excitation (PE) conditions. Furthermore, we establish a novel excitation condition that ensures global convergence of the online parameter estimates. The proposed excitation condition is applicable to a broader class of system trajectories,  including those violating the PE condition.

\end{abstract}

\section{INTRODUCTION}

Dynamical systems are fundamental for modeling a wide range of problems arising in complex physical phenomena, cyber-physical infrastructures, and machine learning tasks. Recent neural network architectures for sequential data processing—such as recurrent neural networks and LSTMs—can be viewed as instances of nonlinear dynamical systems. 
In many applications, the system dynamics are unknown and potentially time-varying.  A fundamental problem in control systems and machine learning is how to identify these unknown dynamics from past time series of trajectories and provide online predictions of the future behavior of the dynamical system.
This paper considers the online identification and prediction problem for nonlinear stochastic dynamical systems of the form
 \begin{equation}\label{e1}
\begin{aligned}
			x_{t+1}&=h(\theta^{*}, x_{t},  u_{t})+w_{t+1},\\
            y_{t+1}&=x_{t+1},
\end{aligned}
 \end{equation}
where $y_{t}$, $x_{t}$, $u_{t}$, $w_{t}$ denote the system outputs, states, inputs, and noise, respectively, and $\theta$ is the unknown parameter to be learned.

Many studies have investigated the learning of unknown parameter $\theta^{*}$ in system (\ref{e1}) using batch learning methods (\cite{allen, caysi}),
which are offline approaches and require the data collection of multiple independent system trajectories. In contrast, this paper proposes an online learning method that learns unknown parameters from a single trajectory $\{(u_{t}, y_{t+1}), t\geq 0\}$. There is some existing work on online learning of nonlinear systems, particularly when the system dynamics include saturation nonlinearities. For example, \cite{yx2022} established the convergence results when the control input is absent and the observed state sequence is required to be an irreducible and aperiodic Markov chain. \cite{oymak} analyzed the SGD algorithm with asymptotic convergence under an i.i.d. Gaussian input sequence $\{u_{t}, t\geq 0\}$. \cite{bahmani}  generalized the results in \cite{oymak} and relaxed the input requirement from i.i.d. Gaussian to any i.i.d. distributions with a heavier-tailed inputs. Moreover, for the general nonlinear dynamics, \cite{regret} considered non-linear autoregressive models, and established the asymptotic upper bound for regrets under strong persistent excitation (PE) condition: $\frac{1}{T}\sum\limits_{t=1}^{T}\nabla_{\theta}h(\theta, x_{t}, u_{t})\nabla_{\theta}^{\top}h(\theta, x_{t}, u_{t})$ converges (as $T\to \infty$) to a positive definite matrix a.s. Furthermore, \cite{sattar} established convergence results for parameter estimation under the assumption that a pre-designed stabilizing controller is known a priori. The controller follows the structure: $u_{t}=\pi(x_{t})+z_{t}$, where $\pi(\cdot)$ is a state feedback controller that exponentially stabilizes the system, and $z_{t}$ is i.i.d. excitation noise. Due to the exponential decay of variable dependencies in stable systems, the work demonstrates that closed-loop data can be used to generate approximately i.i.d. trajectories, thereby preserving the PE condition. In summary, almost all of the existing identification results for nonlinear stochastic dynamical systems
need at least the usual PE condition on the system data, and
actually, most need i.i.d assumptions. Though these idealized
conditions are convenient for theoretical investigation, they are
hardly satisfied or verified in many stochastic dynamical
systems with feedback signals (see, e.g. \cite{feed}).

A relevant field that addresses the online learning problem under general data conditions is system identification and adaptive control, where extensive studies have solved the problem for linear dynamical systems (\cite{str, ljung1976, moore1978, chris, lw1982, chen1982, goodwin, lai1986, kumar, g1991, g1995, cg1991}). Among these works, we mention that Lai and Wei \cite{lw1982} developed an asymptotic analysis of Least Squares (LS) under the weakest possible non-PE condition on signals, which is achieved by using stochastic Lyapunov functions and martingale convergence theorems. Moreover, variety of learning methods have been introduced for online learning problems of linear stochastic dynamical systems with nonlinear observations, including empirical measure approaches (\cite{ZG2003}), set-membership methods (\cite{MC2011}), maximum likelihood algorithms (\cite{godoy}), and stochastic approximation-type algorithms (\cite{GZ2013, you2015}). Recently, in \cite{ZZ2022, zztsqn}, the strong consistency of estimators was established under general non-PE
conditions, which is similar to
the weakest possible signal condition for a stochastic linear
regression model with regular observations. While all these studies focus on linear dynamical systems, the analytical techniques therein inspire the present study for the nonlinear stochastic system $(\ref{e1})$.

%

Based on our previous research on online learning under nonlinear observations in \cite{ZZ2022, zztsqn}, this paper investigates the problem of online identification and prediction for nonlinear dynamical systems. The main difference from our earlier work lies in the nature of the system: the previous study focused on linear dynamical systems with nonlinear observation models, whereas the current work addresses nonlinear dynamical systems. Moreover, the nonlinearities considered in \cite{ZZ2022, zztsqn} were limited to saturation-type functions, while the present formulation allows for a broader class of nonlinear systems. In this paper, for the nonlinear system (\ref{e1}), we establish the convergence of the average adaptive prediction regret and provide an asymptotic upper bound on the parameter estimation error. Our assumptions on the loss function are weaker than the restricted secant inequality and allow for non-convex optimization landscapes. Specifically, we make the following contributions:
\begin{itemize}
    \item We propose a novel online learning algorithm to estimate the unknown parameters of the nonlinear dynamical system $(\ref{e1})$, which allows for non-convex optimization landscape.
    \item We prove the global convergence for the average regret almost surely by leveraging a Lyapunov function and martingale convergence techniques. Compared with the related work of \cite{regret}, our theoretical results do not require the PE conditions.

    \item We establish a new excitation condition on data to guarantee global convergence of online parameter estimates. Unlike prior work of \cite{sattar} that relies on persistently excited closed-loop data, our data conditions can apply to a broader class of closed-loop trajectories, including those without PE conditions. 
\end{itemize}

\section{PROBLEM FORMULATION}
{\bf Notations.}  By $\|\cdot\|$, we denote the 2-norm of vectors or matrices. The maximum and minimum eigenvalues of a symmetric matrix $M$ are denoted by $\lambda_{max}\left\{M\right\}$ and $\lambda_{min}\left\{M\right\}$,  respectively. Denote $\det (M)$ (or $|M|$ in derivation) the determinant of a matrix $M$. Furthermore, let $\{\mathcal{F}_{t}, t\geq 0\}$ be a sequence of non-decreasing $\sigma-$algebras.

Consider the discrete-time stochastic nonlinear dynamical system defined $\forall t\geq 0$:
\begin{equation}\label{system}
\begin{aligned}
			x_{t+1}&=h(\theta^{*}, x_{t}, u_{t})+w_{t+1},\\
            y_{t+1}&=x_{t+1},  
\end{aligned}
\end{equation}
where $\theta^{*} \in \mathbb{R}^{p},$ $y_{t}\in \mathbb{R}^{n}$, $x_{t}\in \mathbb{R}^{n}$, $u_{t}\in \mathbb{R}^{m}$ and $w_{t}\in \mathbb{R}^{n}$ representing the unknown parameter, system output, system state, system input and system noise, respectively; $h(\cdot)=(h_{1}(\cdot),\cdots,h_{n}(\cdot)):  \mathbb{R}^{n+m+p}\rightarrow\mathbb{R}^{n}$ is a nonlinear function.

In online learning scenarios, a crucial challenge in estimating unknown parameters from a data stream is designing effective learning algorithms capable of handling non-independent data. This challenge is more striking when data is collected from a running closed-loop system, where the system input depends on the states and is generated by a designed feedback controller. In this paper, we assume the system is driven by inputs 
\begin{equation}\label{con}
u_{t}=\pi(x_{t})+\epsilon_{t},
\end{equation}
where $\pi(\cdot)$ is a unknown fixed control policy, $\epsilon_{t}$ is an ``exploration signal''. Besides, we assume that the gradient sequence $\left\{\nabla_{\theta}h(\theta^{*}, x_{t}, u_{t})\in \mathbb{R}^{n\times p},  \mathcal{F}_{t}\right\}$ is an adapted sequence, where $\{\mathcal{F}_{t}, t\geq 0\}$ is a sequence of non-decreasing $\sigma-$algebra. From $(\ref{system})$, the optimal predictor of the output $y_{t+1}$ based on the current and past observations $\{y_{t}, u_{t},\cdots,y_{1}, u_{1}\}$ for each $t\geq 2$ is $\bar{y}_{t+1}$, defined by 
\begin{equation}\label{eq2}
\bar{y}_{t+1}=\mathbb{E}\left[y_{t+1}\mid \mathcal{F}_{t}\right]=h(\theta^{*}, y_{t}, u_{t}), 
\end{equation}
with the optimal mean squares prediction error 
\begin{equation}
\mathbb{E}\left[\|y_{t+1}-\bar{y}_{t+1}\|^{2}\right]=\mathbb{E}\left[\|w_{t+1}\|^{2}\right].
\end{equation}
However, in practice, the parameter $\theta^{*}$ is usually unknown. A natural way to give the online predictor for each $t\geq 1$, is to first give the estimates $\hat{\theta}_{t}$ for the unknown parameter based on the current and past observations $\{y_{t}, u_{t},\cdots,y_{1}, u_{1}\}$, and then replace $\theta^{*}$ by $\hat{\theta}_{t}$ in $(\ref{eq2})$. This leads to the online predictor
\begin{equation}
\hat{y}_{t+1}=h(\hat{\theta}_{t}, y_{t}, u_{t}). 
\end{equation}
Usually, the accumulated squared difference between the optimal predictor $\bar{y}_{t+1}$ and the adaptive predictor $\hat{y}_{t+1}$ is called regret $R_{T}$, defined by 
\begin{equation}\label{eqre}
\begin{aligned}
R_{T}&=\sum_{t=0}^{T-1}\left\|\bar{y}_{t+1}-\hat{y}_{t+1}\right\|^{2}\\
&=\sum_{t=0}^{T-1}\left\|h(\theta^{*}, y_{t}, u_{t})-h(\hat{\theta}_{t}, y_{t}, u_{t})\right\|^{2}
\end{aligned}
\end{equation}
The main objectives of this work are twofold: (i) to design an online algorithm that recursively estimates the parameter $\theta^{*}$ through streaming data, generating estimates $\hat{\theta}_{t}$ at each time step t, and (ii) to establish the almost sure asymptotic guarantee that the cumulative regret satisfies
$$R_{T}=o(T),\;\;a.s.$$

For the adaptive algorithm design, we consider the loss function at each time step $t \geq 0$ as 
\begin{equation}\label{eq7}
\begin{aligned}
\mathcal{L}(\theta, y_{t}, u_{t})=\left\|h(\theta^{*}, y_{t}, u_{t})-h(\theta, y_{t}, u_{t})\right\|^{2}.
\end{aligned}
\end{equation}
We next introduce a definition concerning the stability of the closed-loop system.
\begin{definition}[$\rho-$stable]
The closed-loop system $(\ref{system})$ is defined to be  $\rho$-stable if there exists a constant $\rho\in (0,1)$ and a norm $\|\cdot\|_{1}$ on $\mathbb{R}^{n}$ such that, for all $t \geq 0$,
\begin{equation}\label{e11}
\|h(\theta^{*}, x_{t}, u_{t})\|_{1}\leq \rho \|x_{t}\|_{1}.
\end{equation}
\end{definition}

To proceed with the paper, we need the following assumptions:
 \begin{assumption}[Unknown parameter]\label{assum1}
     The unknown parameter $\theta^{*}$ belongs to a known bounded set, i.e., there exists a constant $D>0$ such that $\|\theta^{*}\|\leq D$.
 \end{assumption}

\begin{assumption}[Random noise]\label{assum2}
The noise $\{w_{t+1}, \mathcal{F}_{t}\}$ is a martingale difference sequence, and there exists a constant $\eta>2$ such that
\begin{equation}
\begin{aligned}
\sup_{t\geq 0}\mathbb{E}\left[\|w_{t+1}\|^{\eta}\mid \mathcal{F}_{t}\right]<\infty,\;a.s.
\end{aligned}
\end{equation}
\end{assumption}

\begin{assumption}[Control input]\label{assum3}
The control policy $\pi(\cdot)$ in (\ref{con}) is Lipschitz continuous, i.e., there exists a constant $L>0$ such that
\begin{equation}
\|\pi(x)-\pi(y)\|\leq L\|x-y\|, \forall x, y \in \mathbb{R}^{n}.
\end{equation}
The exploration signals $\{\epsilon_{t}, t\geq 0\}$ are uniformly bounded, that is, there exists a constant $\bar{\epsilon}>0$ such that$\|\epsilon_{t}\|\leq \bar{\epsilon}$ for all $t \geq 0$. Moreover, the closed-loop system $(\ref{system})$ is $\rho$-stable. 
\end{assumption}

\begin{assumption}[Loss function]\label{assum4}
 Given any $r>0$, there exists $\alpha(r)>0$ and $\beta\geq 1$ such that the loss function $\mathcal{L}(\theta, y_{t}, u_{t})$ in (\ref{eq7}) satisfies
 \begin{equation}\label{q9}
\begin{aligned}
\langle\theta-\theta^{*}, \nabla_{\theta} \mathcal{L}(\theta, y_{t}, u_{t})\rangle\geq \alpha(r)\|\nabla_{\theta}^{\top}h(\theta, y_{t}, u_{t})(\theta-\theta^{*})\|^{2},
\end{aligned}
\end{equation}
and
 \begin{equation}\label{eq10}
\begin{aligned}
\mathcal{L}(\theta, y_{t}, u_{t})\leq  \beta\langle\theta-\theta^{*}, \nabla \mathcal{L}_{\theta}(\theta, y_{t}, u_{t})\rangle,
\end{aligned}
\end{equation}
 for any $\|\theta-\theta^{*}\|\leq r$, $\|y_{t}\|\leq r$, $\|u_{t}\|\leq r$, and $t\geq 0$.
Besides, the gradient $\nabla_{\theta}h(\theta, y_{t}, u_{t})$ is continuous with respect to $\theta$ for all $t\geq 0$. Moreover, given any $r>0$, there exist a constant $M(r)>0$ such that for all $a_{1}, a_{2} \in \mathbb{R}^{n}$, $b_{1}, b_{2} \in \mathbb{R}^{m}$, and $\|\theta-\theta^{*}\|\leq r$, the gradient $\nabla_{\theta}h(\theta, \cdot, \cdot)$ satisfies
 \begin{equation}
 \begin{aligned}
\left\|\nabla_{\theta}h(\theta,a_{1}, b_{1})-\nabla_{\theta}h(\theta, a_{2}, b_{2})\right\|\leq M(r)\|\xi_{1}-\xi_{2}\|,\;\;a.s.,
\end{aligned}
 \end{equation} 
 where $\xi_{i}=[a_{i}^{\top}, b_{i}^{\top}]^{\top}, i=1,2.$
\end{assumption}
\begin{remark}
We remark that condition $(\ref{q9})$-$(\ref{eq10})$ is fairly mild than the traditional strong convexity assumption. The inequality $(\ref{q9})$-$(\ref{eq10})$ does not even imply the convexity
of $\mathcal{L}(\theta, y_{t}, u_{t})$. Condition $(\ref{q9})$ is weaker than the restricted secant inequality (\cite{karimi}), which was used in \cite{sattar} to analyze the learning performance of nonlinear dynamical systems. Below, we present three specific examples of systems that satisfy Assumption \ref{assum4}.
The first example is the classical linear system, where the loss function $\mathcal{L}(\theta,y_{t}, u_{t})$ is convex. The second example comes from the nonlinear dynamic equations in recurrent neural network (RNN) models, where the loss function $\mathcal{L}(\theta,y_{t}, u_{t})$ is not even convex.
The third example is the binary observation state model, and in this case, the loss function $\mathcal{L}(\theta,y_{t}, u_{t})$ also exhibits non-convexity.
\end{remark}

\begin{example}[Linear dynamics]\label{ex2}
Consider the classical linear stochastic system defined by
\begin{equation}
\begin{aligned}
x_{t+1}&=Ax_{t}+Bu_{t}+w_{t+1},\\
y_{t+1}&=x_{t+1},
\end{aligned}
\end{equation}
 where $A\in \mathbb{R}^{n\times n}$, $B\in \mathbb{R}^{n\times m}$ are the unknown parameters. Let $\theta=[A^{\top}_{1}, \cdots, A^{\top}_{n}, B^{\top}_{1},\cdots, B^{\top}_{n}]^{\top}$, where $A_i$ and $B_i$ denote the $i$-th row of the matrices $A$ and $B$, respectively, for $1 \leq i \leq n$. From (\ref{eq7}), the loss function $\mathcal{L}(\theta, y_{t}, u_{t})=\|(A^{*}-A)y_{t}+(B^{*}-B)u_{t}\|^{2}$. Then Assumption \ref{assum4} can be verified with $\alpha(r)\equiv 1$, $\beta=1,$
 and $M(r)\equiv 1$.
\end{example}

\begin{example}[RNN dynamics]\label{ex1}
Consider the system
\begin{equation}\label{e14}
x_{t+1}=\sigma(Ax_{t}+Bu_{t})+w_{t+1},
\end{equation}
where $A\in \mathbb{R}^{n\times n}$, $B\in \mathbb{R}^{n\times m}$ are the unknown system parameters, $\sigma(\tau_{1}, \cdots, \tau_{n})=[\sigma_{1}(\tau_{1}), \cdots,\sigma_{n}(\tau_{n})]^{\top}: \mathbb{R}^{n}\to \mathbb{R}^{n}$, where each $\sigma_i(\cdot)$ is the sigmoid function defined by
\begin{equation}\label{sigma}
\sigma_{i}(x)=\frac{1}{1+e^{-x}}.
\end{equation}
This dynamic is the backbone of classical Elman recurrent neural network (RNN) (see, e.g., \cite{elman, enn, rnn, garzon}), 
 which have broad applications in sequential
 learning tasks. Let $\theta=[A^{\top}_{1}, \cdots, A^{\top}_{n}, B^{\top}_{1},\cdots, B^{\top}_{n}]^{\top}$, where $A_i$ and $B_i$ denote the $i$-th row of the matrices $A$ and $B$, respectively, for $1 \leq i \leq n$. In this example, the loss function $\mathcal{L}(\theta, y_{t}, u_{t})=\|\sigma(A^{*}y_{t}+B^{*}u_{t})-\sigma(Ay_{t}+Bu_{t})\|^{2}$. With $\alpha(r) = 2\sigma_i(2r^2)$, $\beta = 1$, and $M(r) \equiv 1$, Assumption \ref{assum4} can also be verified to hold. A detailed proof is provided in Appendix.
\end{example}

\begin{example}[Binary-valued dynamics]\label{ex3}
Consider the binary-valued dynamical system as follows:
\begin{equation}\label{e17}
x_{t+1}=I(Ax_{t}+Bu_{t}+v_{t+1}>c_{t}),
\end{equation}
where $A\in \mathbb{R}^{n\times n}$, $B\in \mathbb{R}^{n\times m}$ are unknown system parameters, $\{c_{t}, \mathcal{F}_{t}\}$ is a bounded adapted threshold sequence, $v_{t+1}$ is the i.i.d. system noise with $v_{t+1}\sim N(0, I_{n\times n})$, $I(\cdot)$ is the indicator function and applies entry-wise on vector inputs. This kind of nonlinear dynamical systems emerges in a wide range of applications, such as neuronal dynamics (\cite{jd1989}), social interactions (\cite{brock, dur2010}), and stock and option price dynamics (\cite{sto}).

Define the nonlinear function $f(\cdot)$ on $\mathbb{R}^{n}$ as
\begin{equation}
f(x)=\mathbb{E}\left[I(x+v_{t+1}>c_{t})\mid \mathcal{F}_{t}\right],
\end{equation}
then system $(\ref{e17})$ is equivalent to 
\begin{equation}\label{e18}
x_{t+1}=f(Ax_{t}+Bu_{t})+w_{t+1},
\end{equation}
where $\{w_{t}, t\geq 1\}$ is a martingale difference sequence. Let $\theta=[A^{\top}_{1}, \cdots, A^{\top}_{n}, B^{\top}_{1},\cdots, B^{\top}_{n}]^{\top}$, where $A_i$ and $B_i$ denote the $i$-th row of the matrices $A$ and $B$, respectively, for $1 \leq i \leq n$. Here, the loss function $\mathcal{L}(\theta, y_{t}, u_{t})=\|f(A^{*}y_{t}+B^{*}u_{t})-f(Ay_{t}+Bu_{t})\|^{2}$. It can be verified that the system in \eqref{e18} satisfies Assumption \ref{assum4} with $\alpha(r) = 2f_{i}(2r^2)$, $\beta = 1$, and $M(r) \equiv 1$.
\end{example}

\section{MAIN RESULTS}
In this section, we first present a novel online estimation algorithm for unknown parameters in the nonlinear stochastic system (\ref{system}). Subsequently, we provide theoretical guarantees for the asymptotic convergence of both the online prediction regret and parameter estimation error.

Inspired by the construct form of the linear least squares (LS) algorithm, we propose an iterative algorithm  for the loss function  $\mathcal{L}(\theta, y_{t}, u_{t})$ defined in (\ref{eq7}), as given in Algorithm $1$.

\begin{algorithm}[h]\label{alg111}
 \caption{Adaptive estimation algorithm for unknown parameter $\theta^{*}$}
Begin with an arbitrarily initial estimate $\hat{\theta}_{0}\in \mathbb{R}^{p}$ and an initial positive definite matrix  $P_{0}=I_{p\times p}$. The estimates $\hat{\theta}_{t}$ for $\theta^{*}$ are recursively defined at any iteration $t\geq 0$ as follows:
 \begin{equation}\label{eq6}
 \begin{aligned}
\hat{\theta}_{t+1}=&\Pi_{P_{t+1}^{-1}}\{\hat{\theta}_{t}+\eta_{t}P_{t+1}\nabla_{\theta}^{\top}h(\hat{\theta}_{t}, y_{t}, u_{t})\left[y_{t+1}-h\left(\hat{\theta}_{t}, y_{t}, u_{t}\right)\right]\},\\
 P_{t+1}=&P_{t}-\eta_{t}^{2}P_{t}\nabla_{\theta}^{\top}h(\hat{\theta}_{t}, y_{t}, u_{t})\Gamma_{t}\nabla_{\theta}h(\hat{\theta}_{t}, y_{t}, u_{t})P_{t},\\
\Gamma_{t}=&(I+\eta_{t}^{2}\nabla_{\theta}h(\hat{\theta}_{t}, y_{t}, u_{t})P_{t}\nabla_{\theta}^{\top}h(\hat{\theta}_{t}, y_{t}, u_{t}))^{-1},\\
\eta_{t}=&\frac{1}{\alpha^{-1}_{t}+2\beta\|\nabla_{\theta}h(\hat{\theta}_{t}, y_{t}, u_{t})P_{t+1}\nabla_{\theta}^{\top}h(\hat{\theta}_{t}, y_{t}, u_{t})\|},
 \end{aligned}
 \end{equation}
 where $\alpha_{t}=\alpha(\|y_{t}\|+\|u_{t}\|+D)$, $D$ is defined as in Assumption \ref{assum1}, and $\Pi_{P_{t+1}^{-1}}\{\cdot\}$ is the projection operator defined via a time-varying Frobenius norm as follows:
	\begin{equation}	
		 \Pi_{P_{t+1}^{-1}}\{x\}=\mathop{\arg\min}_{\|\omega\|\leq D}(x-\omega)^{\top}P_{t+1}^{-1}(x-\omega), \forall x \in \mathbb{R}^{p}.
		 \end{equation}		  
 \end{algorithm}

For Algorithm 1, we establish the following global convergence guarantees.

\begin{theorem}\label{thm2}
Under Assumptions $\ref{assum1}$-$\ref{assum4}$, the sample paths of the accumulated regrets will have the following property: 
 \begin{equation}\label{eqq9}
R_{T}=o\left(T\right),\;\;a.s., \;\; T\to \infty,
 \end{equation}
 where $R_{T}$ is defined in (\ref{eqre}). Moreover, if the system output sequence $\{y_{t}, t\geq 0\}$ is bounded, then we have
  \begin{equation}\label{eqq22}
R_{T}=O\left(\log T\right),\;\;a.s., \;\; T\to \infty.
 \end{equation}
 \end{theorem}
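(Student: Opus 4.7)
The plan is to construct a stochastic Lyapunov function adapted to the weighted geometry of $P_t$, drive it through a one-step inequality, and then apply a martingale convergence theorem to convert the resulting summability into the two regret bounds. Let $\tilde\theta_t:=\hat\theta_t-\theta^*$, $a_t:=h(\theta^*,y_t,u_t)-h(\hat\theta_t,y_t,u_t)$, $A_t:=\nabla_\theta h\,P_{t+1}\nabla_\theta^\top h$ evaluated at $(\hat\theta_t,y_t,u_t)$, and $V_t:=\tilde\theta_t^\top P_t^{-1}\tilde\theta_t$. The matrix inversion lemma applied to the Riccati update in Algorithm~1 yields the clean identity $P_{t+1}^{-1}=P_t^{-1}+\eta_t^{2}\nabla_\theta^\top h\,\nabla_\theta h$. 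Since $\|\theta^*\|\le D$, the projection $\Pi_{P_{t+1}^{-1}}$ is a non-expansion in the norm $\|\cdot\|_{P_{t+1}^{-1}}$, so $V_{t+1}$ is dominated by its pre-projection counterpart. Writing $y_{t+1}-h(\hat\theta_t,y_t,u_t)=a_t+w_{t+1}$ and expanding,
\begin{equation*}
V_{t+1}\le V_t+\eta_t^{2}\|\nabla_\theta h\,\tilde\theta_t\|^{2}+2\eta_t\tilde\theta_t^\top\nabla_\theta^\top h(a_t+w_{t+1})+\eta_t^{2}(a_t+w_{t+1})^\top A_t(a_t+w_{t+1}).
\end{equation*}

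To extract a negative drift in the loss $\mathcal L=\|a_t\|^{2}$, I would combine the two inequalities of Assumption~\ref{assum4}: (\ref{q9}) controls the $\|\nabla_\theta h\,\tilde\theta_t\|^{2}$ contribution, while (\ref{eq10}) extracts $-\eta_t\|a_t\|^{2}/\beta$ from the deterministic cross-term. The specific denominator $\alpha_t^{-1}+2\beta\|A_t\|$ chosen for $\eta_t$ simultaneously enforces $\eta_t\le\alpha_t$ and $\eta_t\|A_t\|\le 1/(2\beta)$; these two inequalities are precisely what is needed to absorb $\eta_t^{2}\|\nabla_\theta h\,\tilde\theta_t\|^{2}$ against the (\ref{q9}) drift and to bound $\eta_t^{2}a_t^\top A_t a_t\le\eta_t\|a_t\|^{2}/(2\beta)$, which is then dominated by the (\ref{eq10}) drift. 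The outcome is a recursion of the form
\begin{equation*}
V_{t+1}\le V_t-c_t\,\|a_t\|^{2}+\Delta N_{t+1}+\eta_t^{2}w_{t+1}^\top A_t w_{t+1},\qquad c_t>0,
\end{equation*}
where $\Delta N_{t+1}$ is an $\mathcal F_t$-martingale difference collecting the $w_{t+1}$-linear terms.

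Summing in $t$ and taking conditional expectations, Assumption~\ref{assum2} bounds the noise surplus by a multiple of $\sum_t\eta_t^{2}\operatorname{tr}(A_t)$; the algebraic identity $\eta_t^{2}A_t=I_n-(I_n+\eta_t^{2}\nabla_\theta h\,P_t\nabla_\theta^\top h)^{-1}$, obtained directly from the Woodbury update, together with $\log(1+x)\ge x/(1+x)$ for $x\ge 0$, telescopes the latter into $\sum_t\eta_t^{2}\operatorname{tr}(A_t)\le\log\det P_T^{-1}$. A Robbins--Siegmund-type martingale convergence theorem then delivers $\sup_t V_t<\infty$ and $\sum_t c_t\|a_t\|^{2}=O(\log\det P_T^{-1})$ almost surely. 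The $\rho$-stability in Assumption~\ref{assum3} combined with Assumption~\ref{assum2} supplies $\sup_t\mathbb E\|y_t\|^{\eta}<\infty$, and the local Lipschitz property in Assumption~\ref{assum4} then converts this into a polynomial envelope for $\alpha_t^{-1}$, $\|\nabla_\theta h\|$, $\|A_t\|$ and hence for $\eta_t^{-1}$. In particular $\log\det P_T^{-1}=o(T)$ a.s., and a Kronecker-type argument upgrades $\sum c_t\|a_t\|^{2}<\infty$ into $R_T=o(T)$ a.s.

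For the refined second bound, uniform boundedness of $\{y_t\}$ makes $\alpha_t$, $\|\nabla_\theta h\|$ and $\|A_t\|$ uniformly bounded, so $\eta_t\ge c>0$ and $c_t\ge c'>0$; the recursion directly yields $c'R_T\le V_0+C\log\det P_T^{-1}+o(\log T)$ a.s., and boundedness of $\nabla_\theta h$ makes $\log\det P_T^{-1}=O(\log T)$ via the standard determinant-growth identity $\det P_T^{-1}=\det\bigl(I+\sum_{t<T}\eta_t^{2}\nabla_\theta^\top h\,\nabla_\theta h\bigr)$, giving $R_T=O(\log T)$. The main obstacle I anticipate is the dissipation step: balancing the negative drifts supplied by (\ref{q9}) and (\ref{eq10}) against the $\eta_t^{2}$-order Hessian-like term $\eta_t^{2}a_t^\top A_t a_t$ by means of the precise step-size formula in Algorithm~1, and then showing, in the non-i.i.d.\ closed-loop regime, that the noise surplus $\sum\eta_t^{2}w_{t+1}^\top A_t w_{t+1}$ grows only like $\log\det P_T^{-1}$ without invoking any PE or i.i.d.\ structure on the signals.
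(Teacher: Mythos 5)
Your core Lyapunov recursion is essentially the paper's: the same $V_t=\tilde\theta_t^{\top}P_t^{-1}\tilde\theta_t$, the non-expansiveness of $\Pi_{P_{t+1}^{-1}}$, the identity $P_{t+1}^{-1}=P_t^{-1}+\eta_t^{2}\phi_t\phi_t^{\top}$, and the use of (\ref{q9})--(\ref{eq10}) together with the two step-size inequalities $\eta_t\le\alpha_t$ and $\beta\eta_t\|\phi_t^{\top}P_{t+1}\phi_t\|\le\tfrac12$ to extract the drift; your trace/telescoping bound $\sum_t\eta_t^2\operatorname{tr}(A_t)\lesssim\log\det P_T^{-1}$ is a legitimate substitute for the paper's Lemma~\ref{lem3}. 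But two steps do not go through as stated. First, a Robbins--Siegmund argument requires the positive perturbation terms to be a.s.\ summable, whereas here their partial sums are only $O(\log\det P_T^{-1})$, which generally diverges; so it yields neither $\sup_t V_t<\infty$ nor the quantitative bound $\sum_t c_t\|a_t\|^2=O(\log\det P_T^{-1})$. The paper instead controls the martingale transform terms pathwise with the Chen--Guo estimate (Lemma~\ref{lem2}), showing they are $o(\text{drift})+O(\log r_T)$, which is what produces Lemma~\ref{lem6}; some such law-of-the-iterated-logarithm-type estimate is needed in place of Robbins--Siegmund. Relatedly, Assumption~\ref{assum2} gives conditional moments bounded a.s., not $\sup_t\mathbb{E}\|y_t\|^{\eta}<\infty$, and the a.s.\ conclusion requires the pathwise Ces\`aro bound $\tfrac1T\sum_{t<T}\|\xi_t\|^{2+\gamma}=O(1)$ a.s.\ (for $\gamma\in(0,\eta-2)$), which the paper derives again via the martingale lemma.

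The more serious gap is the final conversion from the weighted bound to $R_T=o(T)$. Your route assumes a \emph{polynomial} envelope for $\alpha_t^{-1}$ and hence for $\eta_t^{-1}$, but Assumption~\ref{assum4} imposes no growth restriction on $\alpha(r)^{-1}$: in the paper's own RNN example, $\alpha(r)=2\sigma'(2r^2)$, so $\alpha(r)^{-1}$ grows like $e^{2r^2}$, and $\eta_t$ can decay super-polynomially in $\|\xi_t\|$. Moreover, ``Kronecker-type'' reasoning needs a convergent weighted series with monotone weights, while you only have $\sum_t\eta_t\|\psi_t\|^2=O(\log T)$ (not finite) with non-monotone, state-dependent weights, so the upgrade to $R_T=o(T)$ does not follow. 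The paper's (referenced) argument avoids both problems by truncation: fix a threshold $N$; on times with $\|\xi_t\|\le N$ the weight $f(\|\xi_t\|)\ge f(N)>0$ (only positivity at a fixed level is used, no growth rate), giving a contribution $O(f(N)^{-1}\log T)=o(T)$; on times with $\|\xi_t\|>N$, use $\|\psi_t\|^2=O(1+\|\xi_t\|^2)$ (from the Lipschitz gradient and $\|\tilde\theta_t\|\le 2D$) together with the Ces\`aro $(2+\gamma)$-moment bound to show these times contribute at most $O(TN^{-\gamma})$; letting $N\to\infty$ yields (\ref{eqq9}). Your argument for the bounded-output case (\ref{eqq22}) is fine and matches the paper, since there $\eta_t$ is bounded below and $\log\det P_T^{-1}=O(\log T)$.
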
 
 \begin{remark}\label{re77}
 (\ref{eqq9}) shows that as time tends to infinity, the average regret of the adaptive predictor converges to zero. This indicates that the proposed adaptive predictor can asymptotically approximate the optimal predictor in the mean square sense. Furthermore, if the noise is bounded, one can easily verify the boundedness of the output sequence $\{y_t, t \geq 0\}$, thereby ensuring the logarithmic regret (\ref{eqq22}). We remark that our results do not require the strong PE condition assumed in \cite{regret}, which demands almost sure (a.s.) convergence of the matrix $\frac{1}{T}\sum\limits_{t=1}^{T}\nabla_{\theta}h(\theta, y_{t}, u_{t})\nabla_{\theta}^{\top}h(\theta, y_{t}, u_{t})$ to a positive definite matrix.
\end{remark}

\begin{theorem}\label{thm1}
Under Assumptions $\ref{assum1}$-$\ref{assum4}$, the parameter estimation error given by Algorithm 1 has an asymptotic upper bound as $T\to \infty$, in specific,
\begin{equation}\label{6}
\left\|\theta^{*}-\hat{\theta}_{T}\right\|^{2}=O\left(\frac{\log T}{\lambda_{\min}\left\{P_{T+1}^{-1}\right\}}\right), a.s.
\end{equation}
 \end{theorem}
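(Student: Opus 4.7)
The plan is a stochastic Lyapunov analysis in the $P_{t}^{-1}$-weighted norm, in the spirit of Lai and Wei \cite{lw1982}. Define
\begin{equation*}
V_{t}=(\hat{\theta}_{t}-\theta^{*})^{\top}P_{t}^{-1}(\hat{\theta}_{t}-\theta^{*}).
\end{equation*}
Because $\theta^{*}$ lies in the feasible ball $\{\|\omega\|\leq D\}$ by Assumption \ref{assum1}, the projection $\Pi_{P_{t+1}^{-1}}$ is non-expansive in the weighted norm induced by $P_{t+1}^{-1}$, giving $V_{t+1}\leq \widetilde{V}_{t+1}$, where $\widetilde{V}_{t+1}$ is computed from the pre-projection iterate. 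The theorem will then follow once I establish $V_{T+1}=O(\log\det(P_{T+1}^{-1}))$ a.s., since $\|\hat{\theta}_{T}-\theta^{*}\|^{2}\lambda_{\min}\{P_{T+1}^{-1}\}\leq V_{T+1}+o(V_{T+1})$ and $\log\det(P_{T+1}^{-1})=O(\log T)$ a.s.

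\textbf{One-step expansion.} The Sherman-Morrison-Woodbury identity applied to the $P_{t+1}$ recursion yields $P_{t+1}^{-1}=P_{t}^{-1}+\eta_{t}^{2}\nabla_{\theta}^{\top}h_{t}\nabla_{\theta}h_{t}$, where $\nabla_{\theta}h_{t}:=\nabla_{\theta}h(\hat{\theta}_{t},y_{t},u_{t})$. Expanding $\widetilde{V}_{t+1}$ and substituting the prediction residual $y_{t+1}-h(\hat{\theta}_{t},y_{t},u_{t})=\delta_{t}+w_{t+1}$ with $\delta_{t}:=h(\theta^{*},y_{t},u_{t})-h(\hat{\theta}_{t},y_{t},u_{t})$, I decompose
\begin{equation*}
\widetilde{V}_{t+1}-V_{t}=S_{t}^{\mathrm{det}}+S_{t}^{\mathrm{mg}}+S_{t}^{\mathrm{noise}},
\end{equation*}
where $S_{t}^{\mathrm{det}}$ is $\mathcal{F}_{t}$-measurable, $S_{t}^{\mathrm{mg}}$ is an $\mathcal{F}_{t+1}$-martingale increment linear in $w_{t+1}$, and $S_{t}^{\mathrm{noise}}=\eta_{t}^{2}w_{t+1}^{\top}\nabla_{\theta}h_{t}P_{t+1}\nabla_{\theta}^{\top}h_{t}w_{t+1}$. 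The step-size $\eta_{t}$ is chosen precisely so that combining the lower bound \eqref{q9} with the upper bound \eqref{eq10} of Assumption \ref{assum4} forces $S_{t}^{\mathrm{det}}\leq -c_{t}\mathcal{L}(\hat{\theta}_{t},y_{t},u_{t})$ for a nonnegative predictable $c_{t}$, thereby absorbing the positive quadratic $\eta_{t}^{2}\|\nabla_{\theta}h_{t}(\hat{\theta}_{t}-\theta^{*})\|^{2}$ created by the $P_{t+1}^{-1}$ expansion.

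\textbf{Summing, determinant identity, and martingale control.} Telescoping the one-step bound gives
\begin{equation*}
V_{T+1}+\sum_{t=0}^{T}c_{t}\mathcal{L}(\hat{\theta}_{t},y_{t},u_{t})\leq V_{0}+M_{T}+N_{T},
\end{equation*}
with $M_{T}=\sum_{t=0}^{T}S_{t}^{\mathrm{mg}}$ and $N_{T}=\sum_{t=0}^{T}S_{t}^{\mathrm{noise}}$. The matrix-determinant telescoping
\begin{equation*}
\log\det(P_{t+1}^{-1})-\log\det(P_{t}^{-1})=\log\det\bigl(I+\eta_{t}^{2}\nabla_{\theta}h_{t}P_{t}\nabla_{\theta}^{\top}h_{t}\bigr),
\end{equation*}
together with the $(2{+}\varepsilon)$-moment condition in Assumption \ref{assum2} and a Chow-type convergence theorem, gives $N_{T}=O(\log\det(P_{T+1}^{-1}))$ a.s. The main obstacle, as I see it, is the control of $M_{T}$: a pathwise bound using only the projection estimate $\|\hat{\theta}_{t}-\theta^{*}\|\leq 2D$ would blow up linearly in $T$. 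The remedy is the sharp martingale estimate of Lai-Wei, applied to $M_{T}$: its predictable quadratic variation is itself controlled by the same determinant-telescoping identity, so that $M_{T}=o(\log\det(P_{T+1}^{-1}))$ a.s.

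\textbf{Closing.} Combining the above yields $V_{T+1}=O(\log\det(P_{T+1}^{-1}))$ a.s. The Lipschitz assumption on $\nabla_{\theta}h$ in Assumption \ref{assum4}, combined with the $\rho$-stability and input bounds of Assumption \ref{assum3} and the same martingale-plus-stability argument used in the proof of Theorem \ref{thm2}, imply that $\|y_{t}\|$ and hence $\|\nabla_{\theta}h_{t}\|$ grow at most polynomially a.s., so $\log\det(P_{T+1}^{-1})=O(\log T)$ a.s. Dividing by $\lambda_{\min}\{P_{T+1}^{-1}\}$ delivers the stated bound \eqref{6}.
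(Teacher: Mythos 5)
Your overall route is the same one the paper takes: the paper proves Theorem \ref{thm1} by first establishing Lemma \ref{lem6}, i.e. $\tilde{\theta}_{T+1}^{\top}P_{T+1}^{-1}\tilde{\theta}_{T+1}+\sum_{t=0}^{T}\eta_{t}\|\psi_{t}\|^{2}=O(\log r_{T})$ a.s.\ via exactly your Lyapunov function, the projection non-expansiveness, the rank-update identity $P_{t+1}^{-1}=P_{t}^{-1}+\eta_{t}^{2}\phi_{t}\phi_{t}^{\top}$, Assumption \ref{assum4} together with the step-size bounds \eqref{e33}, and the Lai--Wei determinant/trace lemma (Lemma \ref{lem33}, via Lemma \ref{lem3}); it then concludes with the eigenvalue inequality \eqref{500i} and the bound $r_{T}=O(T)$ obtained from $\rho$-stability (which you also supply, and which the paper's two-line proof of Theorem \ref{thm1} leaves implicit).

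However, your handling of the martingale term $M_{T}$ contains a genuine error. You claim $M_{T}=o\bigl(\log\det(P_{T+1}^{-1})\bigr)$ a.s.\ because ``its predictable quadratic variation is itself controlled by the same determinant-telescoping identity.'' It is not: the increments of $M_{T}$ are $-2\eta_{t}\phi_{t}^{\top}\tilde{\theta}_{t}w_{t+1}$ and $2\eta_{t}^{2}\psi_{t}^{\top}\phi_{t}^{\top}P_{t+1}\phi_{t}w_{t+1}$, whose conditional variances are of order $\eta_{t}^{2}\|\phi_{t}^{\top}\tilde{\theta}_{t}\|^{2}$ and $\eta_{t}^{2}\|\psi_{t}\|^{2}$ respectively; these are not weighted by $P_{t+1}$ in the way the determinant telescoping requires, and their sums can grow linearly in $T$ (e.g.\ bounded $\phi_{t}$, $\eta_{t}$ bounded below, $\tilde{\theta}_{t}$ not converging), whereas $\log\det(P_{T+1}^{-1})=O(\log T)$. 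The correct treatment --- and the one in the paper's proof of Lemma \ref{lem6} --- is to apply the Chen--Guo/Lai--Wei estimate (Lemma \ref{lem2}) to get $\sum_{t}\eta_{t}\phi_{t}^{\top}\tilde{\theta}_{t}w_{t+1}=o\bigl(\sum_{t}\eta_{t}^{2}\tilde{\theta}_{t}^{\top}\phi_{t}\phi_{t}^{\top}\tilde{\theta}_{t}\bigr)+O(1)$ and $\sum_{t}\eta_{t}^{2}\psi_{t}^{\top}\phi_{t}^{\top}P_{t+1}\phi_{t}w_{t+1}=o\bigl(\sum_{t}\eta_{t}\|\psi_{t}\|^{2}\bigr)+O(1)$, and then absorb these $o(\cdot)$ quantities into the negative drift terms $\tfrac{1}{4}\sum_{t}\eta_{t}\tilde{\theta}_{t}^{\top}\phi_{t}\psi_{t}$ and $\tfrac{1}{4\beta}\sum_{t}\eta_{t}\|\psi_{t}\|^{2}$ retained on the left-hand side, using \eqref{q9}--\eqref{eq10} and $\eta_{t}\leq\alpha_{t}$; only the conditional-variance part of the squared-noise term $N_{T}$ is bounded by the determinant-type estimate $\sum_{t}\eta_{t}^{2}\|\phi_{t}^{\top}P_{t+1}\phi_{t}\|=O(\log r_{T})$. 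With this repair (which stays entirely within your framework) your plan matches the paper's argument and yields \eqref{6}.
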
 

\begin{remark}
From Theorem \ref{thm1}, it is easy to see that the algorithm will converge to the true parameter almost surely if 
	\begin{equation}\label{re}
		\log t=o\left(\lambda_{\min}\left\{P_{t+1}^{-1}\right\}\right),\;\;a.s.,\;\; t\rightarrow \infty.
		\end{equation}
The convergence condition $(\ref{re})$ specifies the excitation condition on the closed-loop system trajectory to identify the system parameters. This condition is weaker than the control requirements imposed in previous work of \cite{sattar}. The analysis in \cite{sattar} requires the excitation signal $\epsilon_{t}$ to be an independent and identically distributed (i.i.d.) sequence, which ensures satisfaction of the PE condition: $t = O\left(\lambda_{\min}\{P_t^{-1}\}\right)$ in high probability. Moreover, our excitation condition can be verified in non-PE scenarios. For example, in system (\ref{e14}) with bounded noise $\{w_{t}, t\geq 0\}$, if the control input is chosen as $$u_{t}=Ky_{t}+t^{-\frac{1}{4}}\sqrt{\log t}\tilde{\epsilon}_{t},$$ where $K$ is a matrix such that $A - BK$ is stable, and $\{\tilde{\epsilon}_t, t\geq 0\}$ is an i.i.d. bounded sequence with $\mathbb{E}[\tilde{\epsilon}_t] = 0$ and $\mathbb{E}[\tilde{\epsilon}_t \tilde{\epsilon}_t^{\top}] = I$ for each $t\geq 0$, then it can be shown that $\sqrt{t}\log t = O\left(\lambda_{\min}\{P_t^{-1}\}\right)$ (see, e.g., \cite{g1999}), which guarantees the convergence of the parameter estimate $\hat{\theta}_t$ to the true parameter $\theta^{*}$ almost surely.
 \end{remark}

\section{PROOF OF MAIN RESULTS}

In this section, we present the proof of Theorem \ref{thm2} and Theorem \ref{thm1}. For simplicity, for each $t\geq 0$, let
\begin{equation}\label{e27}
\begin{aligned}
\phi_{t}=&\nabla_{\theta}^{\top}h(\hat{\theta}_{t}, y_{t}, u_{t}),\;\; r_{t}=\sum_{i=0}^{t}\|\phi_{i}\|^{2},\\
\psi_{t}=&h\left(\theta^{*}, y_{t}, u_{t}\right)-h\left(\hat{\theta}_{t}, y_{t}, u_{t}\right).
\end{aligned}
\end{equation}
We begin with a preliminary result provided in the following lemma.

\begin{lemma}\label{lem6}
	Under Assumptions $\ref{assum1}$-$\ref{assum4}$, the parameter estimates given by  Algorithm $1$ satisfies that 
	\begin{equation}\label{22}
		\begin{aligned}
\tilde{\theta}_{T+1}^{\top}P_{T+1}^{-1}\tilde{\theta}_{T+1}+\sum_{t=0}^{T}\eta_{t}\psi_{t}^{2}= O\left(\log r_{T}\right),\;\; a.s., 
		\end{aligned}
	\end{equation}
as $T \rightarrow \infty,$ where $\tilde{\theta}_{t}=\theta^{*}-\hat{\theta}_{t}$.
\end{lemma}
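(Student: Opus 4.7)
Plan: The approach is a stochastic Lyapunov argument in the Lai--Wei tradition, using $V_t := \tilde\theta_t^{\top}P_t^{-1}\tilde\theta_t$ as the Lyapunov function. First I would apply the Sherman--Morrison--Woodbury identity to the $(P_{t+1},\Gamma_t)$ update in Algorithm 1 to derive the rank-$n$ inverse recursion $P_{t+1}^{-1}=P_t^{-1}+\eta_t^2\phi_t\phi_t^{\top}$. Because Assumption \ref{assum1} places $\theta^*$ inside the projection's feasible set $\{\|\omega\|\leq D\}$, the projection $\Pi_{P_{t+1}^{-1}}$ is non-expansive in the $P_{t+1}^{-1}$-weighted norm, so $V_{t+1}\leq(\theta^*-\bar\theta_{t+1})^{\top}P_{t+1}^{-1}(\theta^*-\bar\theta_{t+1})$, where $\bar\theta_{t+1}$ is the pre-projection iterate. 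Substituting $y_{t+1}-h(\hat\theta_t,y_t,u_t)=\psi_t+w_{t+1}$ and the inverse-update identity yields, with $u_t:=\phi_t^{\top}\tilde\theta_t$ and $N_t:=\phi_t^{\top}P_{t+1}\phi_t$,
\[
V_{t+1} \leq V_t + \eta_t^2\|u_t\|^2 - 2\eta_t\psi_t^{\top}u_t + \eta_t^2\psi_t^{\top}N_t\psi_t + M_t,
\]
with $M_t:=-2\eta_t w_{t+1}^{\top}u_t+2\eta_t^2 w_{t+1}^{\top}N_t\psi_t+\eta_t^2 w_{t+1}^{\top}N_t w_{t+1}$.

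The algebraic core is to turn the three deterministic cross-terms into a strictly negative drift scaling with $\eta_t\|\psi_t\|^2$. The step-size definition rearranges to the identity $\eta_t\alpha_t^{-1}+2\beta\eta_t\|N_t\|=1$, giving both $\eta_t\leq\alpha_t$ and $\eta_t\|N_t\|\leq 1/(2\beta)$. I would split $-2\eta_t\psi_t^{\top}u_t=-\lambda\eta_t G_t-(1-\lambda)\eta_t G_t$ with $G_t:=2\psi_t^{\top}u_t$ and $\lambda$ tied to $\eta_t/\alpha_t$: apply (\ref{q9}) on the first piece to cancel $\eta_t^2\|u_t\|^2$ exactly, and apply (\ref{eq10}) on the second piece together with the step-size bound $\eta_t\|N_t\|\leq 1/(2\beta)$ to dominate $\eta_t^2\psi_t^{\top}N_t\psi_t$. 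This leaves a residual drift of the form $-c_0\eta_t\|\psi_t\|^2$ for some constant $c_0>0$ depending only on $\beta$. Telescoping from $0$ to $T$ then gives
\[
V_{T+1}+c_0\sum_{t=0}^T\eta_t\|\psi_t\|^2\leq V_0+\sum_{t=0}^T M_t.
\]

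The cross-martingale noise terms $\eta_t w_{t+1}^{\top}u_t$ and $\eta_t^2 w_{t+1}^{\top}N_t\psi_t$ have conditional quadratic variations dominated by the already-extracted drift and by a log-determinant increment; Chow's strong law for martingales, whose hypotheses are verified through the $\eta>2$ moment bound of Assumption \ref{assum2}, bounds their cumulative contribution by $o(\cdot)$ of the left-hand side modulo an a.s.\ finite constant. The purely quadratic noise $\eta_t^2 w_{t+1}^{\top}N_t w_{t+1}$ has conditional mean $O(\eta_t^2\mathrm{tr}\,N_t)$, and the matrix-determinant identity
\[
\log\det P_{T+1}^{-1}-\log\det P_0^{-1}=\sum_{t=0}^T\log\det(I+\eta_t^2\phi_t^{\top}P_t\phi_t)
\]
together with $\log(1+x)\geq x/(1+x)$ yields $\sum_t\eta_t^2\mathrm{tr}(N_t)=O(\log\det P_{T+1}^{-1})=O(\log r_T)$, using $P_0=I$ and the boundedness of $\eta_t$ forced by Assumption \ref{assum4}. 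I expect the main obstacle to be the step-size-tuned convex-combination splitting above: Assumption \ref{assum4} supplies two separate non-strongly-convex inequalities that must jointly eliminate $\eta_t^2\|u_t\|^2$ and dominate $\eta_t^2\psi_t^{\top}N_t\psi_t$, and the step size is defined implicitly through $\|N_t\|$, which itself depends on $\eta_t$ via $P_{t+1}$. The precise identity $\eta_t\alpha_t^{-1}+2\beta\eta_t\|N_t\|=1$ baked into Algorithm 1 is what makes this cancellation close and is what dictated the choice of the step-size formula.
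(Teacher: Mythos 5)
Your proposal follows essentially the same route as the paper's proof: the Lyapunov function $V_t=\tilde\theta_t^{\top}P_t^{-1}\tilde\theta_t$ with projection non-expansiveness, the inverse recursion $P_{t+1}^{-1}=P_t^{-1}+\eta_t^2\phi_t\phi_t^{\top}$, the drift extraction from (\ref{q9})--(\ref{eq10}) combined with the step-size identity $\eta_t\le\alpha_t$, $\beta\eta_t\|\phi_t^{\top}P_{t+1}\phi_t\|\le\tfrac12$, and the same martingale estimation lemma for the noise terms; your direct log-determinant telescoping for $\sum_t\eta_t^2\operatorname{tr}(\phi_t^{\top}P_{t+1}\phi_t)=O(\log r_T)$ is just the paper's Lemma~\ref{lem3} (via the Lai--Wei lemma) carried out by hand. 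The only caveat is that, as in the paper's display (\ref{277}), you must retain a portion of the cross-term drift $\sum_t\eta_t\tilde\theta_t^{\top}\phi_t\psi_t$ (not only $\sum_t\eta_t\|\psi_t\|^2$) on the left-hand side so that the quadratic variation $\sum_t\eta_t^2\|\phi_t^{\top}\tilde\theta_t\|^2$ of the first martingale term is absorbed, which your splitting permits with no change to the argument.
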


\begin{proof} 
	Following the analysis ideas of the classical least-squares for linear stochastic regression models (see e.g., \cite{moore1978}, \cite{lw1982}, \cite{g1995}), we consider the following stochastic  Lyapunov function: $$ V_{t+1}=\tilde{\theta}_{t+1}^{\top}P_{t+1}^{-1}\tilde{\theta}_{t+1}. $$
By (\ref{e27}), we have
\begin{equation}\label{e28}
\begin{aligned}
y_{t+1}-h(\hat{\theta}_{t}, y_{t}, u_{t})
=\psi_{t}+w_{t+1}.
\end{aligned}
\end{equation}
Hence, by \eqref{eq6} and \eqref{e28}, we have for each $t \geq 0$, 
\begin{equation}\label{e555}
\begin{aligned}
V_{t+1}\leq& \left[\tilde{\theta}_{t}-\eta_{t}P_{t+1}\phi_{t}(y_{t+1}-h(\hat{\theta}_{t}, y_{t}, u_{t}))\right]^{\top}P_{t+1}^{-1}\\
&\left[\tilde{\theta}_{t}-\eta_{t}P_{t+1}\phi_{t}(y_{t+1}-h(\hat{\theta}_{t}, y_{t}, u_{t}))\right]\\
=&\tilde{\theta}_{t}^{\top}P_{t+1}^{-1}\tilde{\theta}_{t}-2\eta_{t}\tilde{\theta}_{t}^{\top}\phi_{t}\psi_{t}+\eta_{t}^{2}\psi_{t}^{\top}\phi_{t}^{\top}P_{t+1}\phi_{t}\psi_{t}\\		
		&+2\eta_{t}^{2}\psi_{t}^{\top} \phi_{t}^{\top}P_{t+1}\phi_{t}w_{t+1}-2\eta_{t}\phi_{t}^{\top}\tilde{\theta}_{t}w_{t+1}\\
&+\eta_{t}^{2}w_{t+1}^{\top}\phi_{t}^{\top}P_{t+1}\phi_{t}w_{t+1}.
\end{aligned}
\end{equation}
Moreover, note that by $(\ref{eq6})$ and the well-known matrix inversion formula, we have
\begin{equation}\label{e5566}
P_{t+1}^{-1}=P_{t}^{-1}+\eta_{t}^{2}\phi_{t}\phi_{t}^{\top}.
\end{equation}
Furthermore, since
\begin{equation}
\nabla_{\theta}\mathcal{L}(\theta, y_{t}, u_{t})=-\phi_{t}\psi_{t},
\end{equation}
it follows from Assumption \ref{assum4} that
\begin{equation}\label{e32}
\tilde{\theta}_{t}^{\top}\phi_{t}\psi_{t}\geq \alpha_{t}\tilde{\theta}_{t}^{\top}\phi_{t}\phi_{t}^{\top}\tilde{\theta}_{t}
\end{equation}
and 
\begin{equation}\label{eq33}
\psi_{t}^{\top}\psi_{t}\leq  \beta\tilde{\theta}_{t}^{\top}\phi_{t}\psi_{t}.
\end{equation}
Besides, by the definition of $\eta_{t}$ in $(\ref{eq6})$, we have
\begin{equation}\label{e33}
\;\; \eta_{t}\leq \alpha_{t},\;\; \beta\eta_{t}\left\|\phi_{t}^{\top}P_{t+1}\phi_{t}\right\|\leq \frac{1}{2}.
\end{equation}
Thus, by $(\ref{e32})$-$(\ref{e33})$, we have
\begin{equation}
\eta_{t}\tilde{\theta}_{t}^{\top}\phi_{t}\psi_{t}\geq \eta_{t}^{2}\tilde{\theta}_{t}^{\top}\phi_{t}\phi_{t}^{\top}\tilde{\theta}_{t},
\end{equation}
and
\begin{equation}
\eta_{t}^{2}\psi_{t}^{\top}\phi_{t}^{\top}P_{t+1}\phi_{t}\psi_{t}\leq \frac{1}{2}\eta_{t}\tilde{\theta}_{t}^{\top}\phi_{t}\psi_{t}.
\end{equation}
By $(\ref{e555})$-$(\ref{eq33})$, we can obtain
\begin{equation}\label{27}
	\begin{aligned}
		V_{t+1}
		\leq&V_{t}
-\frac{1}{2}\eta_{t}\tilde{\theta}_{t}^{\top}\phi_{t}\psi_{t}+2\eta_{t}^{2}\psi_{t}^{\top} \phi_{t}^{\top}P_{t+1}\phi_{t}w_{t+1}\\		
		&-2\eta_{t}\phi_{t}^{\top}\tilde{\theta}_{t}w_{t+1}+\eta_{t}^{2}w_{t+1}^{\top}\phi_{t}^{\top}P_{t+1}\phi_{t}w_{t+1}.
	\end{aligned}
\end{equation}
Moreover, by (\ref{eq33}), we have
\begin{equation}\label{e37}
	\begin{aligned}
\frac{1}{4}\eta_{t}\tilde{\theta}_{t}^{\top}\phi_{t}\psi_{t} \geq \frac{1}{4\beta}\eta_{t}\psi_{t}^{\top}\psi_{t}.
	\end{aligned}
\end{equation}
Now, substituting $(\ref{e37})$ into $(\ref{27})$, and
summing up both sides of $\left(\ref{27} \right)$ from $t=0$ to $T$, we obtain
\begin{equation}\label{277}
	\begin{aligned}
		&V_{T+1}+\frac{1}{4\beta}\sum_{t=0}^{T}\eta_{t}\psi_{t}^{\top}\psi_{t}+\frac{1}{4}\sum_{t=0}^{T}\eta_{t}\tilde{\theta}_{t}^{\top}\phi_{t}\psi_{t}\\
		\leq&V_{0}+
\sum_{t=0}^{T}\eta_{t}^{2}\|\phi_{t}^{\top}P_{t+1}\phi_{t}\|\mathbb{E}\left[\|w_{t+1}\|^{2}\mid \mathcal{F}_{t}\right]\\		&-\sum_{t=0}^{T}2\eta_{t}\phi_{t}^{\top}\tilde{\theta}_{t}w_{t+1}+\sum_{t=0}^{T}2\eta_{t}^{2}\psi_{t}^{\top}\phi_{t}^{\top}P_{t+1}\phi_{t}w_{t+1}\\
&+\sum_{t=0}^{T}\eta_{t}^{2}\|\phi_{t}^{\top}P_{t+1}\phi_{t}\|\left(\|w_{t+1}\|^{2}-\mathbb{E}\left[\|w_{t+1}\|^{2}\mid \mathcal{F}_{t}\right]\right).
	\end{aligned}
\end{equation}

We now analyze noise terms on the right-hand side (RHS) of $(\ref{277})$ term by term. 
For the first noise term of $(\ref{277})$, by Lemma \ref{lem3} in Appendix, we have
\begin{equation}\label{a136}
\begin{aligned}
&\sum_{t=0}^{T}\|\eta_{t}^{2}\phi_{t}^{\top}P_{t+1}\phi_{t}\|
=O\left(\log r_{T}\right),\;\;a.s.
\end{aligned}
\end{equation}
Hence, by Assumption $\ref{assum2}$, we have
\begin{equation}\label{a136}
\begin{aligned}
&\sum_{t=0}^{T}\|\eta_{t}^{2}\phi_{t}^{\top}P_{t+1}\phi_{t}\|\|\mathbb{E}\left[\|w_{t+1}\|^{2}\mid \mathcal{F}_{t}\right]
=O\left(\log r_{T}\right),\;\;a.s.
\end{aligned}
\end{equation}
For the last three terms on the RHS of $(\ref{277})$, by Lemma \ref{lem2} in Appendix, we have
\begin{equation}
\begin{aligned}
\sum_{t=0}^{T}\eta_{t}\phi_{t}^{\top}\tilde{\theta}_{t}w_{t+1}
=&o\left(\sum_{t=0}^{T}\eta_{t}^{2}\left[\tilde{\theta}_{t}^{\top}\phi_{t}\phi_{t}^{\top}\tilde{\theta}_{t}\right]\right)\\
=&o\left(\sum_{t=0}^{T}\eta_{t}\left(\phi_{t}^{\top}\tilde{\theta}_{t}\right)\psi_{t}\right),\;\;a.s.
\end{aligned}
\end{equation} 
Moreover, let $s_{T}=\left[\sum\limits_{t=0}^{T}\left\|\eta_{t}^{2}\phi_{t}^{\top}P_{t+1}\phi_{t}\psi_{t}\right\|^{2}\right]^{\frac{1}{2}}$, then by $(\ref{e33})$, we have
\begin{equation}\label{e42}
\begin{aligned}
s_{T}=&O\left(\left[\sum_{t=0}^{T}\eta_{t}^{2}\|\psi_{t}\|^{2}\right]^{\frac{1}{2}}\right)=O\left(\left[\sum_{k=0}^{T}\eta_{t}\|\psi_{t}\|^{2}\right]^{\frac{1}{2}}\right),\;\;a.s.
\end{aligned}
\end{equation}
Thus, by (\ref{e42}) and Lemma \ref{lem2}, we know that
 \begin{equation}
 \begin{aligned}
 &\sum_{t=0}^{T}\eta_{t}^{2}\psi_{t}\phi_{t}^{\top}P_{t+1}\phi_{t}w_{t+1}\\
 =&O\left(s_{T}\sqrt{\log s_{T}}\right)=o\left(\sum_{t=0}^{T}\eta_{t}\|\psi_{t}\|^{2}\right),\;\;a.s.
 \end{aligned}
 \end{equation}
 For the last term on the RHS of $(\ref{277})$, let $$g_{t}=\left[\sum_{i=0}^{t}\left(\eta_{i}^{2}\|\phi_{i}^{\top}P_{i+1}\phi_{i}\|\right)^{\frac{\eta}{2}}\right]^{\frac{2}{\eta}}$$
 Then, by Assumption \ref{assum2}, Lemma \ref{lem2}, and $(\ref{a136})$, we have
\begin{equation}\label{e141}
\begin{aligned}
 &\sum_{t=0}^{T}\eta_{t}^{2}\|\phi_{t}^{\top}P_{t+1}\phi_{t}\|\left(\|w_{t+1}^{2}\|-\mathbb{E}[\|w_{t+1}\|^{2}\mid \mathcal{F}_{t}]\right)\\
 =&O(g_{T}\sqrt{\log g_{T}})=o\left(\log r_{T}\right),\;\;a.s.,
 \end{aligned}
 \end{equation}
where we have used the fact that $\|\eta_{t}^{2}\phi_{t}^{\top}P_{t+1}\phi_{t}\|\leq 1$. Substituting $(\ref{a136})$-$(\ref{e141})$ into $(\ref{277})$, we finally obtain
\begin{equation}\label{e147}
\begin{aligned}
\tilde{\theta}_{T+1}P_{T+1}^{-1}\tilde{\theta}_{T+1}+\sum_{t=0}^{T}\eta_{t}\|\psi_{t}\|^{2}
=O\left(\log r_{T}\right),\;\;a.s.
\end{aligned}
\end{equation}
\end{proof}

By exploiting Lemma~\ref{lem6}, we are ready to prove Theorem \ref{thm2} as follows.

 \begin{proof}[Theorem \ref{thm2}]
 Let $\gamma\in (0, \eta-2)$, $\xi_{t}=\left[y_{t}^{\top}, u_{t}^{\top}\right]^{\top}$, and
$$v_{t+1}=\|w_{t+1}\|^{2+\gamma}-\mathbb{E}\left[\|w_{t+1}\|^{2+\gamma}\mid \mathcal{F}_{t}\right].$$ 
 Then $\{v_{t+1}, \mathcal{F}_{t}\}$ is a martingale difference sequence, and by Lemma \ref{lem2}, we have
 \begin{equation}\label{e46}
\frac{1}{T}\sum_{t=0}^{T-1}v_{t+1}=O\left(\frac{1}{T}m_{T}\sqrt{\log m_{T}}\right)=o(1),\;\;a.s.,
 \end{equation}
 where $m_{t}=t^{\frac{2}{\eta}}$. Thus, from (\ref{e46}), we can obtain
\begin{equation}\label{e44}
\begin{aligned}
&\frac{1}{T}\sum_{t=0}^{T-1}\|w_{t+1}\|^{2+\gamma}\\
=&\frac{1}{T}\sum_{t=0}^{T-1}\mathbb{E}\left[\|w_{t+1}\|^{2+\gamma}\mid \mathcal{F}_{t}\right]+\frac{1}{T}\sum_{t=0}^{T-1}v_{t+1}=O(1),\;\;a.s.
\end{aligned}
\end{equation}
Moreover, from Assumption $\ref{assum3}$ and (\ref{e11}), we have
 \begin{equation}\label{e51}
 \begin{aligned}
 \left\|x_{t+1}\right\|_{1}^{2+\gamma}\leq &\rho \|x_{t}\|_{1}^{2+\gamma}+\|w_{t+1}\|_{1}^{2+\gamma}\\
 \leq &\rho^{t+1} \|x_{0}\|_{1}^{2+\gamma}+\sum_{i=0}^{t}\rho^{t-i}\|w_{i+1}\|_{1}^{2+\gamma},\;\;a.s.
 \end{aligned}
  \end{equation}
 Thus, from $(\ref{e44})$ and $(\ref{e51})$, we have
 \begin{equation}
 \frac{1}{T}\sum_{t=0}^{T-1}\|y_{t}\|^{2+\gamma}=O(1),\;\;a.s.
 \end{equation}
By Assumption $\ref{assum3}$,  we also have
 \begin{equation}
 \frac{1}{T}\sum_{k=0}^{T-1}\|u_{t}\|^{2+\gamma}=O(1),\;\;a.s.
 \end{equation}
Hence, we have 
$$\frac{1}{T}\sum_{t=0}^{T-1}\|\xi_{t}\|^{2+\gamma}=O(1),\;\;a.s.,$$
and
 \begin{equation}\label{e500}
r_{T}=O(T),\;\;a.s.
 \end{equation}
Furthermore, for each $x\in \mathbb{R}^{+}$, define the function $f(\cdot)$ by
$$f(x)=\frac{1}{2\beta\|P_{0}\| M(D)x^{2}+\alpha^{-1}(2x+D)+c_{0}},$$
where $c_{0}$ is a constant defined by $$c_{0}=2\beta\|P_{0}\|[M(D)\|\xi_{0}\|+\sup_{\|s\|\leq D}\|\nabla_{\theta}h(s, x_{0}, u_{0})\|].$$ Then we can easily verify that $f(\|\xi_{t}\|)\leq \eta_{t}$. Thus, by (\ref{e147}) and (\ref{e500}), we  have
\begin{equation}\label{e44}
\sum_{t=0}^{T-1}f(\|\xi_{t}\|)\|\psi_{t}\|^{2}=O(\log T).
\end{equation}
From $(\ref{e44})$, the proof of $(\ref{eqq9})$ can be obtained by along the same lines as those of Theorem 1 in \cite{cdc}, and is omitted here.

To prove \eqref{eqq22}, note that if the sequence $\{y_{t}, t\geq 0\}$ is bounded, then so is $\{\xi_{t}, t\geq 0\}$. Consequently, the sequence $\{f(\|\xi_{t}\|), t\geq 0\}$ has positive lower bound. Therefore, \eqref{eqq22} follows directly from \eqref{e44}.

\end{proof}

\begin{proof}[Theorem \ref{thm1}]
Theorem \ref{thm1} can be directly obtained by Lemma \ref{lem6} and the following fact:
\begin{equation}\label{500i}
\begin{aligned}
\tilde{\theta}_{T+1}^{\top}P_{T+1}^{-1}\tilde{\theta}_{T+1}\geq \lambda_{\min}\left\{P_{T+1}^{-1}\right\}\|\tilde{\theta}_{T+1}\|^{2}.
\end{aligned}
\end{equation}
\end{proof}
 
\section{NUMERICAL EXPERIMENTS}	

Consider the stochastic nonlinear dynamical system
	\begin{equation}\label{a}
\begin{aligned}			x_{t+1}&=\sigma(Ax_{t}+Bu_{t})+w_{t+1}, \\
y_{t+1}&=x_{t+1},
\end{aligned}
	 \end{equation}
    where 
    \begin{equation}
A=\begin{bmatrix} 0.5 & 1\\ 1.5& 0.3 \end{bmatrix}; \;\; B=\begin{bmatrix} 1 \\ 0 \end{bmatrix},
    \end{equation}
     the noise sequence  $\left\{ w_{t}, t\geq 1\right\}$ is i.i.d. with standard normal distribution $\mathcal{N}(0,1)$, the nonlinear function $\sigma(\cdot)$ is defined as in Example \ref{ex1}.  To estimate $\theta=[A, B]^{\top}$ by Algorithm $1$, take the initial value $\hat{\theta}_{0}=[0, 0, 0]^{\top}$, $P_{0}=I$, and $D=2$.

	Let the control input $u_{t}$ be designed by 
    \begin{equation}\label{eu}
    u_{t}=Ky_{t}+\epsilon_{t},
    \end{equation}
    where $K$ is obtained by solving a discrete-time Riccati equation (by setting cost $Q$ and $R$ to identity).
    We now present experiments under the following three cases: 1) $\epsilon_t$ is identically zero. 2) $\epsilon_t=v_{t}$, where $v_{t}$ is i.i.d. noise uniformly distributed on the unit sphere. 3) $\epsilon_t$ is decaying independent noise, i.e., $\epsilon_t = t^{-\frac{1}{4}}\sqrt{\log t} v_{t}$. 

As stated in Example \ref{ex1}, the system dynamics \eqref{a} and control policy \eqref{eu} satisfy Assumptions \ref{assum1}-\ref{assum4}. Theorem \ref{thm2} consequently guarantees asymptotic convergence of the average regret to zero in all three considered cases, as empirically validated by the convergence trajectories of $\frac{1}{t} R_t$ depicted in Fig. 2.
 
 Fig. $3$ illustrates the trajectories of the parameter estimation error. As stated in Theorem \ref{thm1}, the asymptotic upper bound of the estimation error depends on the excitation condition of the data. In Case 1 (Fig. 3(a)), insufficient excitation leads to non-zero estimation error. Case 2 (Fig. 3(b)) incorporates the excitation signal $\epsilon_{t}$ following \cite{sattar}, which ensures the closed-loop data satisfies the PE condition and achieves the theoretical convergence rate of $O(\frac{\log T}{T})$. Notably, Case 3 (Fig. 3(c)) demonstrates that our proposed Algorithm 1 maintains convergence to zero estimation error despite the lack of PE condition in the closed-loop data, highlighting a key advantage of our approach.
 
\begin{figure}[htbp]
	\centering	\subfloat[$\epsilon_{t}=0$]{\includegraphics[width=.9\columnwidth]{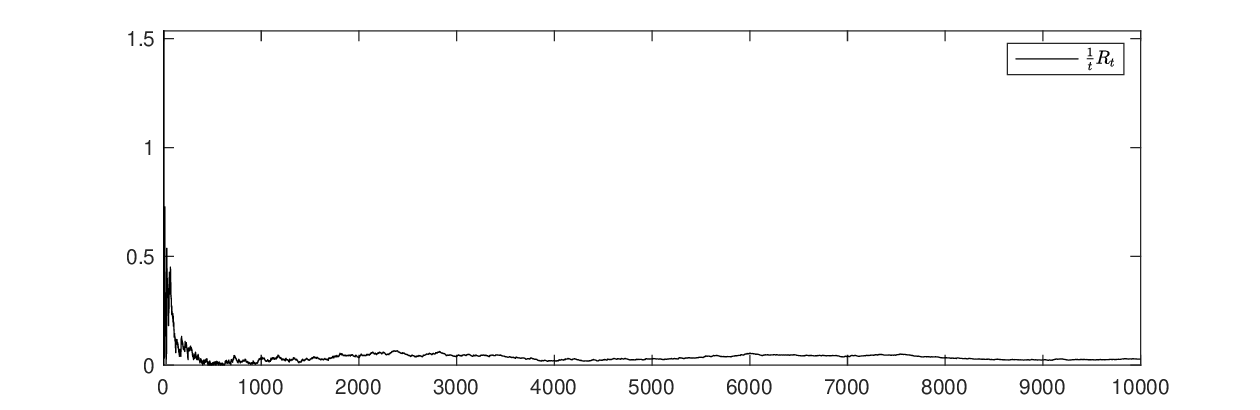}}\\	\subfloat[$\epsilon_{t}=v_{t}$]{\includegraphics[width=.9\columnwidth]{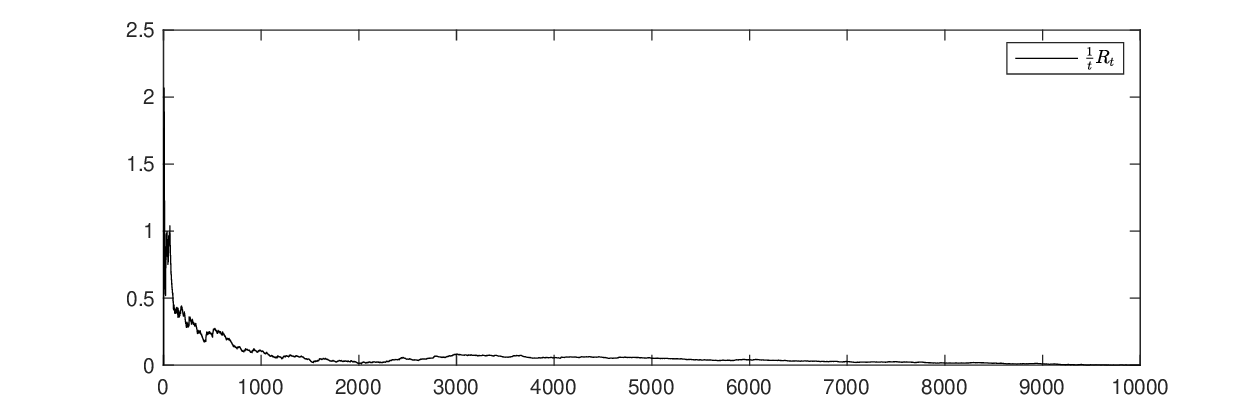}}\\	\subfloat[$\epsilon_{t}=t^{-\frac{1}{4}}\sqrt{\log t} v_{t}$]{\includegraphics[width=.9\columnwidth]{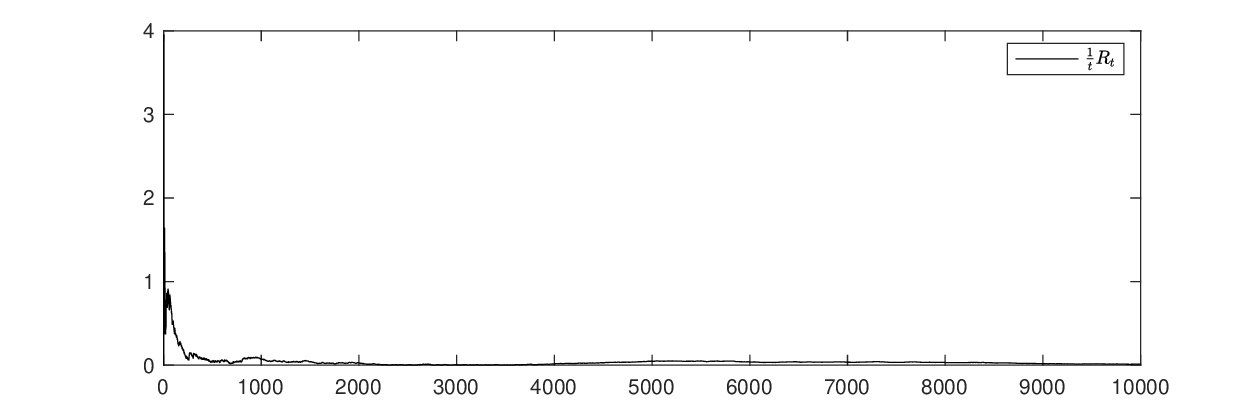}}
	\caption{Trajectories of $\frac{1}{t}R_{t}$}
\end{figure}

\begin{figure}[htbp]
	\centering
	\subfloat[$\epsilon_{t}=0$]
{\includegraphics[width=.9\columnwidth]{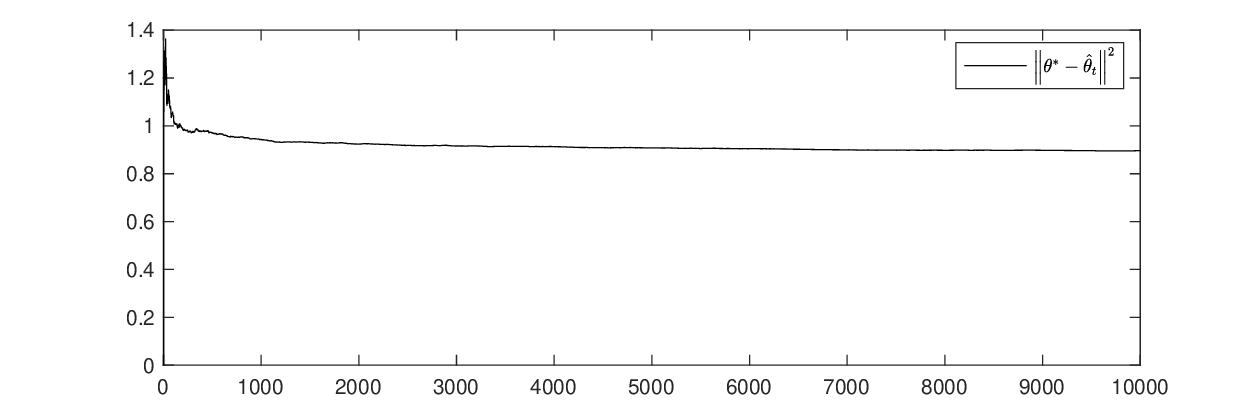}}\\	\subfloat[$\epsilon_{t}=v_{t}$]{\includegraphics[width=.9\columnwidth]{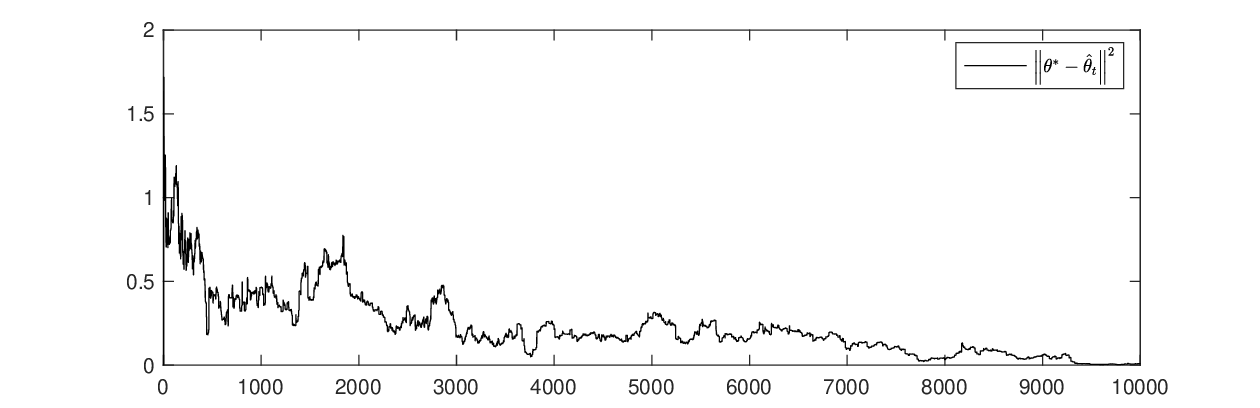}}\\	\subfloat[$\epsilon_{t}=t^{-\frac{1}{4}}\sqrt{\log t} v_{t}$]{\includegraphics[width=.9\columnwidth]{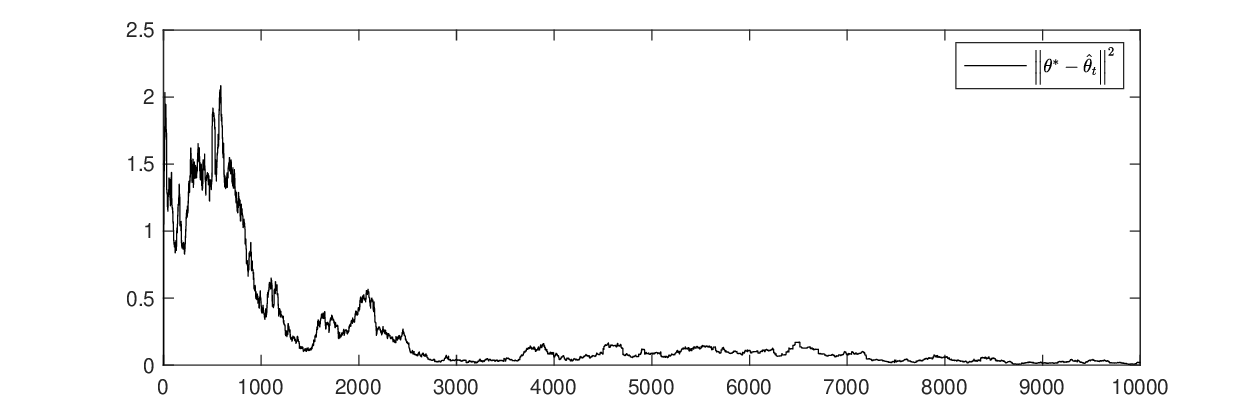}}
	\caption{Trajectories of $\|\theta^{*}-\hat{\theta}_{t}\|^{2}$}
\end{figure}

\section{CONCLUSIONS}
This paper has investigated online identification and prediction for nonlinear stochastic dynamical systems, proposing a novel online learning algorithm for parameter estimation. We have demonstrated that: 1) the average regret of adaptive prediction converges to zero without requiring any excitation conditions, and 2) the parameter estimates converge almost surely to the true values without traditional persistent excitation requirements.  However, there are still many interesting problems that need to be further investigated, for example, 
to design and analyze the online learning algorithms based on partial observations, and to consider the adaptive control problem for nonlinear stochastic dynamical systems.

\addtolength{\textheight}{-1cm}   



\section*{APPENDIX}\label{appendix}

\begin{lemma}\label{lem2} (\cite{cg1991}). Let $\left\{f_{t}, \mathcal{F}_{t}\right\}$ be an adapted sequence and $\left\{w_{t}, \mathcal{F}_{t}\right\}$ a martingale difference sequence.
	If
	\begin{equation}\label{18}
		\sup_{t} 	\mathbb{E}[\|w_{t+1}\|^{\alpha}\mid \mathcal{F}_{t}] < \infty \;\;a.s.
	\end{equation}
	for some $\alpha \in (0, 2]$, then as $T\rightarrow \infty$:
	\begin{equation}\label{19}
		\sum_{t=0}^{T-1}f_{t}w_{t+1} = O\left(s_{T}(\alpha)\log^{\frac{1}{\alpha}+\eta}(s_{T}^{\alpha}(\alpha)+e)\right)\;a.s., \forall \eta >0,
	\end{equation}
	where
	\begin{equation}
		s_{T}(\alpha)=\left(\sum_{t=0}^{T-1}\|f_{t}\|^{\alpha}\right)^{\frac{1}{\alpha}}.
	\end{equation}
\end{lemma}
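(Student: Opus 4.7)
The lemma is a Marcinkiewicz--Zygmund / Chow--Stout type strong law of large numbers for the weighted martingale $M_T := \sum_{t=0}^{T-1} f_t w_{t+1}$, with the normalization depending on the best available conditional moment order $\alpha$. The plan is to reduce the general case $\alpha \in (0,2)$ to the square-integrable case via a predictable truncation of the noise, and to handle the base case by a direct martingale strong-law argument.

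First I would treat $\alpha = 2$ directly. Here $M_T$ is a martingale whose predictable quadratic variation satisfies
\[
\langle M \rangle_T = \sum_{t=0}^{T-1} \|f_t\|^2 \, \mathbb{E}[\|w_{t+1}\|^2 \mid \mathcal{F}_t] = O(s_T^2(2)) \;\;\text{a.s.}
\]
by the uniform conditional moment bound \eqref{18}. The classical Chow strong law for martingales (sharpened by the martingale law of the iterated logarithm) then yields $\|M_T\| = O\bigl(s_T(2)\, \log^{1/2+\eta}(s_T^2(2)+e)\bigr)$ a.s.\ for every $\eta > 0$, which is exactly \eqref{19} at $\alpha = 2$.

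For $\alpha \in (0,2)$ I would introduce a predictable truncation level $c_t$, to be calibrated, and decompose $w_{t+1} = \tilde{w}_{t+1} + \hat{w}_{t+1}$, where $\tilde{w}_{t+1} = w_{t+1}\mathbf{1}\{\|w_{t+1}\|\le c_t\} - \mathbb{E}[w_{t+1}\mathbf{1}\{\|w_{t+1}\|\le c_t\}\mid \mathcal{F}_t]$ is the centered truncation and $\hat{w}_{t+1}$ is the residual. The truncated sequence $\tilde{w}_{t+1}$ is a martingale difference with the improved variance bound $\mathbb{E}[\|\tilde{w}_{t+1}\|^2\mid\mathcal{F}_t]\le c_t^{\,2-\alpha}\mathbb{E}[\|w_{t+1}\|^\alpha \mid \mathcal{F}_t]$, so the base case applied to $\sum f_t\tilde{w}_{t+1}$ controls that part in terms of $(\sum \|f_t\|^2 c_t^{\,2-\alpha})^{1/2}$ times a logarithmic factor. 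The residual splits into a jump contribution $\sum f_t w_{t+1}\mathbf{1}\{\|w_{t+1}\|>c_t\}$, handled by conditional Borel--Cantelli using $\mathbb{P}(\|w_{t+1}\|>c_t\mid\mathcal{F}_t)\le c_t^{-\alpha}\mathbb{E}[\|w_{t+1}\|^\alpha\mid\mathcal{F}_t]$, plus a compensator bounded deterministically by $\sum \|f_t\| c_t^{\,1-\alpha}\mathbb{E}[\|w_{t+1}\|^\alpha\mid\mathcal{F}_t]$.

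The main obstacle, and the reason the exponent of the logarithm is $1/\alpha + \eta$ rather than $1/2 + \eta$, is the delicate calibration of $c_t$: one needs $c_t$ scaling as an appropriate power of $s_t(\alpha)/\|f_t\|$ so that the truncated-martingale bound, the jump-part Borel--Cantelli series, and the compensator series all fit simultaneously inside $O\bigl(s_T(\alpha)\log^{1/\alpha+\eta}(s_T^\alpha(\alpha)+e)\bigr)$. Pushing this balancing through by Kronecker's lemma and summation by parts, while preserving the sharp exponent $1/\alpha$, is the technical heart of the proof, and is what upgrades the $\log^{1/2+\eta}$ rate of the base case into the $\log^{1/\alpha+\eta}$ rate claimed for general $\alpha$.
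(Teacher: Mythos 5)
The paper never proves this lemma: it is quoted as a known result from Chen and Guo (1991) and used as a black box, so the only question is whether your sketch would stand on its own. Its architecture (centered predictable truncation, a square-integrable base case, conditional Borel--Cantelli for the jumps, a compensator bound, then balancing) is the standard machinery and can indeed be pushed through, e.g.\ with $c_t \asymp \bigl(s_{t+1}(\alpha)/\|f_t\|\bigr)\log^{1/\alpha+\epsilon}\!\bigl(s_{t+1}^{\alpha}(\alpha)+e\bigr)$. But as written there is a genuine gap: you explicitly defer the calibration of $c_t$, and that calibration is precisely where the claimed exponent $1/\alpha+\eta$ comes from. One must reconcile three competing requirements --- $\sum_t c_t^{-\alpha}\mathbb{E}[\|w_{t+1}\|^{\alpha}\mid\mathcal{F}_t]<\infty$ for the jump part, $\bigl(\sum_t\|f_t\|^{2}c_t^{2-\alpha}\bigr)^{1/2}$ small enough that the truncated martingale stays within $s_T(\alpha)\log^{1/\alpha+\eta}(\cdot)$, and the compensator series of order $s_T(\alpha)\log^{1/\alpha+\eta}(\cdot)$ --- and asserting that they ``fit simultaneously'' is essentially asserting the lemma. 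Two further points need repair: (i) the $\alpha=2$ base case neither needs nor can in general invoke the martingale LIL (which requires extra conditions such as conditional Lindeberg); Chow's a.s.\ convergence theorem for $\sum_t a_t w_{t+1}$ with predictable $a_t$ plus Kronecker's lemma already gives the stated bound, and when you apply this base case to $\tilde w_{t+1}$ you must use the version with time-varying conditional variances (absorb $c_t^{(2-\alpha)/2}$ into the weight), since $\mathbb{E}[\|\tilde w_{t+1}\|^{2}\mid\mathcal{F}_t]$ is not uniformly bounded; (ii) your compensator bound $\|f_t\|c_t^{1-\alpha}\mathbb{E}[\|w_{t+1}\|^{\alpha}\mid\mathcal{F}_t]$ comes from the tail $\mathbb{E}[\|w_{t+1}\|\mathbf{1}\{\|w_{t+1}\|>c_t\}\mid\mathcal{F}_t]$ only when $\alpha\geq 1$, while for $\alpha<1$ the conditional first moment need not exist and one must instead bound the truncated mean; the two regimes require separate treatment.

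For comparison, the route taken in the cited source is shorter and avoids the balancing altogether: set $a_t=f_t/\bigl(s_{t+1}(\alpha)\log^{1/\alpha+\eta}(s_{t+1}^{\alpha}(\alpha)+e)\bigr)$, verify the deterministic Abel--Dini-type estimate $\sum_t\|f_t\|^{\alpha}/\bigl(s_{t+1}^{\alpha}(\alpha)\log^{1+\alpha\eta}(s_{t+1}^{\alpha}(\alpha)+e)\bigr)<\infty$, invoke Chow's local convergence theorem (valid for all $\alpha\in(0,2]$, with the case $\alpha<1$ handled by absolute convergence without using the martingale property) to conclude that $\sum_t a_t w_{t+1}$ converges a.s., and finish with Kronecker's lemma, which even yields $o(\cdot)$ rather than $O(\cdot)$. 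The truncation you describe is exactly what is buried inside the proof of Chow's theorem, so your plan is re-deriving that tool by hand; if you intend to do so, the calibration step you labeled ``the technical heart'' must actually be carried out for the argument to count as a proof.
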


\begin{lemma}(\cite{lw1982})\label{lem33} Let \;$X_{1}, X_{2},\cdots$ be a sequence of vectors in $	\mathbb{R}^{n} (n\geq 1)$ and let $A_{t} = A_{0}+\sum\limits_{i=1}^{t}X_{i}X_{i}^{\top}$, $b_{t}=\sum\limits_{i=0}^{t}\|X_{i}\|^{2}$. Assume that $A_{0}$ is non-singular, then for any $\delta>0$, 
		\begin{equation}\label{20}
			\sum_{t=0}^{T-1}X_{t}^{\top}A_{t+1}^{-1}X_{t} = O(\log b_{T}),\;\;n\rightarrow \infty.	
		\end{equation}
\end{lemma}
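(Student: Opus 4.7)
The plan is to invoke the classical Lai--Wei deterministic regret argument, which reduces the bound to a single telescoping sum of log-determinants. First I would write $A_{t+1} = A_{t} + X_{j}X_{j}^{\top}$ for the index $j$ appropriate to the stated convention, and apply the matrix determinant identity $\det(A+xx^{\top}) = \det(A)\bigl(1+x^{\top}A^{-1}x\bigr)$ together with the Sherman--Morrison formula to get the identity
\begin{equation*}
X_{j}^{\top}A_{t+1}^{-1}X_{j} \;=\; 1 - \frac{\det(A_{t})}{\det(A_{t+1})}.
\end{equation*}
This is the key step: it converts a quadratic form in the inverse Gram matrix into a ratio of determinants.

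Next, using the elementary inequality $1-u \le \log(1/u)$ for $u\in(0,1]$, each summand is bounded by $\log\det(A_{t+1}) - \log\det(A_{t})$. Summing from $t=0$ to $T-1$ the series telescopes to $\log\det(A_{T}) - \log\det(A_{0})$ (up to a single boundary term discussed below). To finish, I would control $\log\det(A_{T})$ via AM--GM on the eigenvalues of the positive definite matrix $A_{T}$:
\begin{equation*}
\det(A_{T}) \;\le\; \Bigl(\tfrac{1}{n}\mathrm{tr}(A_{T})\Bigr)^{n} \;\le\; \Bigl(\tfrac{1}{n}\bigl(\mathrm{tr}(A_{0}) + b_{T}\bigr)\Bigr)^{n},
\end{equation*}
so $\log\det(A_{T}) = O(n \log b_{T}) = O(\log b_{T})$, yielding the claimed bound.

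The main subtlety, and the one place the argument requires care rather than routine calculation, is the indexing mismatch: as written the sum uses $X_{t}^{\top}A_{t+1}^{-1}X_{t}$ with $A_{t+1}=A_{t}+X_{t+1}X_{t+1}^{\top}$, so the rank-one update inside $A_{t+1}$ is not along $X_{t}$. I would handle this by the Loewner monotonicity $A_{t+1}^{-1} \preceq A_{t}^{-1}$ (valid since $A_{t+1} \succeq A_{t}$), which lets me replace $A_{t+1}^{-1}$ with $A_{t}^{-1}$ at the cost of an inequality, after which the clean telescoping form $X_{t}^{\top}A_{t}^{-1}X_{t} = 1 - \det(A_{t-1})/\det(A_{t})$ applies and the telescoping proceeds as above, with a single boundary term $X_{0}^{\top}A_{0}^{-1}X_{0}$ absorbed into the $O(\cdot)$. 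Everything else is bookkeeping; the non-routine conceptual ingredient is simply the determinant identity coupled with $1-u\le -\log u$.
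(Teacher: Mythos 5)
Your argument is correct and is essentially the classical determinant--telescoping proof of Lai and Wei \cite{lw1982}, which the paper invokes by citation without reproducing a proof: the identity $x^{\top}(A+xx^{\top})^{-1}x = 1-\det(A)/\det(A+xx^{\top})$, the bound $1-u\le -\log u$, and the AM--GM control $\log\det(A_T)=O(\log b_T)$ are exactly the standard ingredients. Your handling of the index mismatch (the update inside $A_{t+1}$ being along $X_{t+1}$ rather than $X_t$) via the Loewner monotonicity $A_{t+1}^{-1}\preceq A_{t}^{-1}$ is sound, with the only implicit caveat that $A_0$ must be symmetric positive definite (as in the paper's application, where $A_0=I$) rather than merely non-singular.
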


\begin{lemma}\label{lem3} The gradient sequence $\{\phi_{t}, t\geq 0\}$ satisfies the following property as $T\rightarrow \infty$, 
		\begin{equation}\label{20}
			\sum_{t=0}^{T-1}\|\eta_{t}^{2}\phi_{t}^{\top}P_{t+1}\phi_{t}\| = O(\log r_{T}),	
		\end{equation}
        where $\phi_{t}$ and $r_{T}$ are defined in (\ref{e27}).
\end{lemma}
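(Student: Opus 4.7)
The plan is a matrix-valued adaptation of the classical Lai--Wei determinant argument (cf.\ Lemma~\ref{lem33}), applied to the scaled regressor $\eta_{t}\phi_{t}\in\mathbb{R}^{p\times n}$. The starting point is the information-matrix recursion $P_{t+1}^{-1}=P_{t}^{-1}+\eta_{t}^{2}\phi_{t}\phi_{t}^{\top}$, which is precisely (\ref{e5566}) in the excerpt and follows from the matrix inversion lemma applied to the $P_{t+1}$ update of Algorithm~1. In particular, $\{P_{t}^{-1}\}$ is non-decreasing in the PSD order and $P_{t+1}\preceq P_{0}=I$ for every $t$.

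The core inequality is the matrix trace--log bound: for any positive definite $A,B$,
$$\mathrm{tr}\bigl(B^{-1}(B-A)\bigr)\leq \log\det B-\log\det A,$$
obtained by applying the scalar inequality $1-x^{-1}\leq \log x$ to the eigenvalues of $A^{-1/2}BA^{-1/2}$. Setting $A=P_{t}^{-1}$ and $B=P_{t+1}^{-1}$, so that $B-A=\eta_{t}^{2}\phi_{t}\phi_{t}^{\top}$, and using trace cyclicity, the left-hand side becomes $\eta_{t}^{2}\,\mathrm{tr}(\phi_{t}^{\top}P_{t+1}\phi_{t})$. Telescoping from $t=0$ to $T-1$ with $\det P_{0}^{-1}=1$ produces
$$\sum_{t=0}^{T-1}\eta_{t}^{2}\,\mathrm{tr}\bigl(\phi_{t}^{\top}P_{t+1}\phi_{t}\bigr)\leq \log\det P_{T}^{-1}.$$
Since each summand on the left is positive semidefinite, its operator norm is bounded by its trace, so the same bound holds for $\sum_{t=0}^{T-1}\|\eta_{t}^{2}\phi_{t}^{\top}P_{t+1}\phi_{t}\|$.

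It then remains to show $\log\det P_{T}^{-1}=O(\log r_{T})$. By AM--GM on the eigenvalues, $\log\det P_{T}^{-1}\leq p\log(\mathrm{tr}(P_{T}^{-1})/p)$, and expanding the trace through the recursion yields $\mathrm{tr}(P_{T}^{-1})=p+\sum_{t=0}^{T-1}\eta_{t}^{2}\|\phi_{t}\|_{F}^{2}\leq p+\min(n,p)\sum_{t=0}^{T-1}\eta_{t}^{2}\|\phi_{t}\|^{2}$. Invoking the step-size bound $\eta_{t}\leq \alpha_{t}$ from (\ref{e33}), combined with the boundedness of $\alpha(\cdot)$ on the relevant data range (as exhibited in Examples~\ref{ex2}--\ref{ex3}), gives $\mathrm{tr}(P_{T}^{-1})=O(r_{T})$ and hence the claimed $O(\log r_{T})$ bound. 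I expect the main obstacle to be exactly this last passage: the trace--log manipulation is routine, but converting the Frobenius-weighted quantity $\sum_{t}\eta_{t}^{2}\|\phi_{t}\|_{F}^{2}$ into the operator-norm quantity $r_{T}=\sum_{t}\|\phi_{t}\|^{2}$ is precisely where the self-normalizing design of $\eta_{t}$ in Algorithm~1 (rather than a generic step size) is genuinely used.
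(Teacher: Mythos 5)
Your proof is correct, but it takes a different route from the paper's. The paper first bounds the operator norm by the trace and splits it into the $n$ scalar components of $h$, introduces componentwise information matrices $P_{t,i}^{-1}=P_{t,i-1}^{-1}+\eta_t^2\nabla_\theta h_i\nabla_\theta^\top h_i$ (wait, indexed in $t$), uses the monotonicity $P_{t+1}\preceq P_{t+1,i}$, and then invokes the cited Lai--Wei result (Lemma~\ref{lem33}) once per component; you instead work directly with the full information matrix via the recursion (\ref{e5566}), apply the matrix trace--log-determinant inequality, telescope, and then bound $\log\det P_T^{-1}$ by $O(\log r_T)$ explicitly. Your argument is more self-contained (no component decomposition, no black-box lemma) and makes visible the one step both proofs actually rely on but the paper leaves implicit: passing from the $\eta_t$-weighted quantity ($\log\det P_T^{-1}$ for you, $\log b_T$ with $b_T=\sum_t\eta_t^2\|\nabla_\theta h_i\|^2$ for the paper) to $\log r_T$ requires $\eta_t$ to be suitably bounded. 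Your appeal to ``boundedness of $\alpha(\cdot)$ as exhibited in the examples'' is the weakest link, since Assumption~\ref{assum4} does not state this; a cleaner fix is to note that (\ref{q9}) only becomes weaker when $\alpha(r)$ is decreased, so one may take $\alpha(r)\leq 1$ without loss of generality, giving $\eta_t\leq\alpha_t\leq 1$ by (\ref{e33}) and hence $\mathrm{tr}(P_T^{-1})=O(r_T)$ (and in any case only a polynomial bound in $r_T$ is needed inside the logarithm). With that small repair your proof is complete and, if anything, more transparent than the paper's.
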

\begin{proof}
Let
$$h(\hat{\theta}_{t}, y_{t}, u_{t})=[h_{1}(\hat{\theta}_{t}, y_{t}, u_{t}), \cdots, h_{n}(\hat{\theta}_{t}, y_{t}, u_{t})]^{\top},$$
and
\begin{equation}
\begin{aligned}
P_{t+1,i}^{-1}&=P_{t,i}^{-1}+\eta_{t}^{2}\nabla_{\theta}h_{i}(\hat{\theta}_{t}, y_{t}, u_{t})\nabla_{\theta}^{\top}h_{i}(\hat{\theta}_{t}, y_{t}, u_{t}),\; t\geq 0,\\
P_{0,i}&=I,\; 1\leq i \leq n. 
\end{aligned}
\end{equation}
where $h_{i}(\theta, y_{t}, u_{t})$ is the $i^{th}$ component of the vector $h(\theta, y_{t}, u_{t})$. Then, we have
$$\phi_{t}=[\nabla_{\theta}h_{1}(\hat{\theta}_{t}, y_{t}, u_{t}), \cdots, \nabla_{\theta}h_{n}(\hat{\theta}_{t}, y_{t}, u_{t})].$$
and
$$P_{t}\leq P_{t,i},\;\; 1\leq i \leq n,\;\; t\geq 0.$$
Thus, we obtain
\begin{equation}\label{e67}
\begin{aligned}
&\sum_{t=0}^{T-1}\|\eta_{t}^{2}\phi_{t}^{\top}P_{t+1}\phi_{t}\|\\
\leq &\sum_{t=0}^{T-1}tr(\eta_{t}^{2}\phi_{t}^{\top}P_{t+1}\phi_{t})\\
=&\sum_{t=0}^{T-1}\sum_{i=1}^{n}\eta_{t}^{2}\nabla_{\theta}^{\top}h_{i}(\hat{\theta}_{t}, y_{t}, u_{t})P_{t+1}\nabla_{\theta}h_{i}(\hat{\theta}_{t}, y_{t}, u_{t})\\
\leq&\sum_{t=0}^{T-1}\sum_{i=1}^{n}\eta_{t}^{2}\nabla_{\theta}^{\top}h_{i}(\hat{\theta}_{t}, y_{t}, u_{t})P_{t+1, i}\nabla_{\theta}h_{i}(\hat{\theta}_{t}, y_{t}, u_{t}).
\end{aligned}
\end{equation}
From (\ref{e67}) and Lemma \ref{lem33}, (\ref{20}) can be obtained.
\end{proof}

\begin{proof}[Example \ref{ex1}]
In this case, the loss function defined in (\ref{eq7}) is  
\begin{equation}
\begin{aligned}
\mathcal{L}(\theta, y_{t}, u_{t})=\left\|\sigma(A^{*}y_{t}+B^{*}u_{t})-\sigma(Ay_{t}+Bu_{t})\right\|^{2}.
\end{aligned}
\end{equation}
Let $\sigma'_{i}(a,b)=\frac{\sigma_{i}(a)-\sigma_{i}(b)}{a-b}$ for each $a, b \in \mathbb{R}, a\not=b$, where $\sigma_{i}(\cdot)$ is defined in  $(\ref{sigma})$. Then, for any $x=[x^{(1)}, \cdots, x^{(n)}]^{\top}, y=[y^{(1)}, \cdots, y^{(n)}]^{\top}\in \mathbb{R}^{n}$, define $\sigma'(x,y)=[\sigma_{1}'(x^{(1)}, y^{(1)}),\cdots, \sigma_{n}'(x^{(n)}, y^{(n)})]^{\top}$. Let $J_{t}=\diag(\sigma'(A^{*}y_{t}+B^{*}u_{t}, Ay_{t}+Bu_{t})).$ Then, $\psi_{t}$, as defined in (\ref{e27}), is
\begin{equation}
\begin{aligned}
\psi_{t}=\sigma(A^{*}y_{t}+B^{*}u_{t})-\sigma(Ay_{t}+Bu_{t})=\langle J_{t}\otimes\begin{bmatrix} y_{t}\\ u_{t} \end{bmatrix}, \theta^{*}-\theta\rangle,
\end{aligned}
\end{equation}
where $\otimes$ is the Kronecker
product. Besides, the gradient of loss $\mathcal{L}(\theta, y_{t}, u_{t})$ is 
\begin{equation}
\begin{aligned}
\nabla_{\theta} \mathcal{L}(\theta, y_{t}, u_{t})=-2\left[\diag(\sigma'(Ay_{t}+Bu_{t}))\psi_{t}\right]\otimes\begin{bmatrix} y_{t} \\ u_{t} \end{bmatrix}.
\end{aligned}
\end{equation}
Thus, one can obtain
\begin{equation}
\begin{aligned}
&\langle\theta-\theta^{*}, \nabla_{\theta} \mathcal{L}(\theta, y_{t}, u_{t})\rangle\\
=&2(\theta-\theta^{*})^{\top}\left(\left[\diag(\sigma'(Ay_{t}+Bu_{t}))J_{t}\right]\otimes\begin{bmatrix} y_{t}\\ u_{t} \end{bmatrix}\begin{bmatrix} y_{t}^{\top}, u_{t}^{\top} \end{bmatrix}\right)(\theta-\theta^{*}).
\end{aligned}
\end{equation}
Moreover, the gradient of $h(\theta, y_{t}, u_{t})$ is 
\begin{equation}
\nabla_{\theta} \sigma(Ay_{t}+Bu_{t})=\diag(\sigma'(Ay_{t}+Bu_{t}))\otimes\begin{bmatrix} y_{t} \\ u_{t} \end{bmatrix}.
\end{equation}
Let $\alpha(r)=2\sigma'(2r^{2}).$
Since
\begin{equation}
\begin{aligned}
&2\sigma'(A_{i}^{*}y_{t}+B_{i}^{*}u_{t}, A_{i}y_{t}+B_{i}u_{t})\sigma'(A_{i}y_{t}+B_{i}u_{t})\\
\geq& \alpha(r)[\sigma'(A_{i}y_{t}+B_{i}u_{t})]^{2},
\end{aligned}
\end{equation}
where $A_i$ and $B_i$ denote the $i$-th row of the matrices $A$ and $B$, respectively, for $1 \leq i \leq n$, one can obtain
\begin{equation}
\begin{aligned}
&\langle\theta-\theta^{*}, \nabla_{\theta} \mathcal{L}(\theta, y_{t}, u_{t})\rangle\\
\geq &\alpha(r)(\theta-\theta^{*})^{\top}\nabla_{\theta} \sigma(Ay_{t}+Bu_{t})\nabla_{\theta}^{\top} \sigma(Ay_{t}+Bu_{t})(\theta-\theta^{*}).
\end{aligned}
\end{equation}

The proof of $\beta=1$ and $M(r)\equiv 1$ follow directly from the fact $|\sigma_{i}'(x)|\leq 1, \forall x \in \mathbb{R}, 1\leq i \leq n.$

The proof that Example \ref{ex3} satisfies Assumption \ref{assum4} can be given similarly to that of Example \ref{ex1}.
\end{proof}



\begin{thebibliography}{99}



\bibitem{yx2022}
Y. Xing, X. He, H. Fang, K. H. Johansson, Recursive network stimation for a model with binary-valued states. {\it IEEE Transactions on Automatic Control}, vol.68, no.7, pp. 3872-3887, 2022.

\bibitem{sto}
H. Nyberg, Forecasting the direction of the US stock market with dynamic binary probit models. {\it International Journal of Forecasting}, vol.27, no.2, pp. 561-578, 2011.


\bibitem{cg1991}
		H. F. Chen, L. Guo,
  {\it Identification and Stochastic Adaptive control}, Boston, MA: Birkhauser, 1991.


\bibitem{allen}
Z. Allen-Zhu , Y. Li, Z. Song, On the convergence rate of training recurrent neural networks. {\it in Neural Information Processing Systems}, 32, 2019.

\bibitem{foster}
D. Foster, T. Sarkar, A. Rakhlin, Learning nonlinear dynamical systems from a single trajectory, {\it in Learning for Dynamics and Control}, pp. 851-861, 2020.


\bibitem{oymak}
S. Oymak, Stochastic gradient descent learns state equations with nonlinear activations, {\it in Conference on Learning Theory}, pp. 2551-2579, 2019.

\bibitem{bahmani}
S. Bahmani, J. Romberg, Convex programming for estimation in nonlinear recurrent models, {\it Journal of Machine Learning Research}, vol. 21, no. 235, pp. 1-20, 2020.

\bibitem{regret}
T. L. Lai, G. Zhu, Adaptive prediction in non-linear autoregressive models and control systems, {\it Statistica Sinica}, pp. 309-334, 1991.

\bibitem{sattar}
Y. Sattar, S. Oymak, Non-asymptotic and accurate learning of nonlinear dynamical systems, {\it Journal of Machine Learning Research}, vol.23, no.140, pp. 1-49, 2022.

\bibitem{feed}
L. Guo, Feedback and uncertainty: Some basic problems and results, {\it Annual Reviews in Control}, vol.49, pp. 27-36, 2020.

\bibitem{caysi}
S. Cayci, A. Eryilmaz, Convergence of gradient descent for recurrent neural networks: a nonasymptotic Analysis. {\it arXiv preprint arXiv:2402.12241}, 2024.

\bibitem{mix}
D. Katselis, C. L. Beck, R. Srikant, Mixing times and structural inference for Bernoulli autoregressive processes. {\it IEEE Transactions on Network Science and Engineering}, vol. 6, no.3, pp. 364-378, 2018.

		\bibitem{str} 
	K. J. \AA str\"{o}m and B. Wittenmark, 
	On self tuning regulators,
	{\it Automatica}, vol. 9, no. 2, pp. 185-199, March. 1973.
	
		\bibitem{ljung1976} 
	L. Ljung, 
	Consistency of the least-squares identification method, 
	{\it IEEE Transactions on Automatic Control}, vol. 21, no. 5, pp. 779–781, Oct. 1976.
	
	\bibitem{moore1978} 
	 J. B. Moore, 
	 On strong consistency of least squares identification algorithm,
	  {\it Automatica}, vol. 14, no. 5, pp. 505–509, Sep. 1978.
	  
	  	\bibitem{chris} 
	  Christopeit N, Helmes K. 
	  Strong consistency of least squares estimators in linear regression models,
	{\it The Annals of Statistics}, vol. 8, no. 4, pp. 778-788, 1980.
	  
	  		\bibitem{lw1982}	
		T. L. Lai and  C. Z. Wei,
		Least squares estimates in stochastic regression models with applications to identification and control of dynamic systems,
		{\it The Annals of Statistic}, vol. 10, no. 1, pp. 154-166, Mar. 1982.
	  
	  \bibitem{chen1982} 
	   H. F. Chen, 
	   Strong consistency and convergence rate of least squares identification,
	   {\it Sci. Sinica}, Ser. A 25, no. 7, pp. 771–784, 1982. 
	   
	   	    \bibitem{goodwin} 
	G. C. Goodwin, P. J. Ramadge, and P. E. Caines, 
	Discrete time stochastic adaptive control, 
	  {\it SIAM J. on Control and Optimization}, vol. 19, no. 6, pp. 829–853, 1981.
	   
	     \bibitem{lai1986} 
	    T. L. Lai and C. Z. Wei, 
	    Extended least squares and their applications to adaptive control and prediction in linear systems,
	     {\it IEEE Transactions on Automatic Control}, vol. 31, no. 10, pp. 898–906, Oct. 1986.
	
	    \bibitem{kumar} 
	     P. R. Kumar, 
	     Convergence of adaptive control schemes using least squares parameter estimates,
	     {\it IEEE Transactions on Automatic Control}, vol. 35, no. 4, pp. 416–424, Apr. 1990.
	
		    \bibitem{g1991} 
	L. Guo and H. F. Chen, 
	The \AA str\"{o}m-Wittenmark self-tuning regulator revised and ELS-based adaptive tracker,
	 {\it IEEE Transactions on Automatic Control}, vol. 36, no. 7, pp. 802–812, 1991.
	
		\bibitem{g1995}
		L. Guo,
		Convergence and logarithm laws of self-tuning regulators, 
		{\it Automatica}, vol. 31, no. 3, pp. 435-450, Mar. 1995.
	
		\bibitem{ZG2003}
		L. Y. Wang, J. F. Zhang and  G. G. Yin, 
		System identification using binary sensors,
		{\it IEEE Transactions on Automatic Control},  vol. 48, no. 11, pp. 1892-1907, 2003.
		
			\bibitem{godoy}
		B. I. Godoy, G. C. Goodwin,  J. C. Ag\"uero, D. Marelli and T. Wigren, 
		On identification of FIR systems having quantized output data, 
		{\it Automatica}, vol. 47, no. 9, pp. 1905-1915, 2011.
		
		
			\bibitem{MC2011}
		M. Casini, A. Garulli and  A. Vicino, 
		Input design in worst-case system identification using binary sensors,
		{\it IEEE Transactions on Automatic Control},  vol. 56, no. 5, pp. 1186-1191, May. 2011.
	
		\bibitem{you2015}
		K. You, 
		  Recursive algorithms for parameter estimation with adaptive quantizer, 
		{\it Automatica}, vol. 52, pp. 192--201, Feb. 2015.
		
		\bibitem{GZ2013}
		J. Guo  and Y. Zhao, 
		Recursive projection algorithm on FIR system identification with binary-valued observations,
		{\it Automatica}, vol. 49, no. 11, pp. 3396-3401, Nov. 2013.
				
		\bibitem{ZZ2022} 
		 L. T. Zhang,  Y. Zhao and  L. Guo,
		Identification and adaptation with binary-valued observations under non-persistent excitation,
		{\it Automatica}, vol. 138: 110158,  Apr. 2022. 
	
	\bibitem{zztsqn}
L. T. Zhang and L. Guo, 
 Adaptive identification with guaranteed
  performance under saturated observation and non-persistent excitation, 
  {\it IEEE Transactions on Automatic Control}, vol. 69, no. 3, pp. 1584-1599, Mar. 2023.
	
		
\bibitem{karimi}
H. Karimi, J. Nutini, M. Schmidt, Linear convergence of gradient and proximal-gradient methods under the polyak-łojasiewicz condition, 
{\it Joint European conference on machine learning and knowledge discovery in databases}, Cham: Springer International Publishing, pp. 795-811, 2016.

\bibitem{elman}
J. L. Elman, Finding structure in time, {\it Cognitive Science}, vol. 14, no. 2, pp. 179-211, 1990.



\bibitem{enn}
S. C. Kremer, On the computational power of Elman-style recurrent networks, {\it IEEE Transactions on Neural Networks}, vol. 6, no. 4, pp. 1000-1004, 1995.


\bibitem{rnn}
A. M. Schäfer, H. G. Zimmermann, Recurrent neural networks are universal approximators, {\it in International Conference on Artificial Neural Networks (ICANN-06)}, Athens, Greece, September, 2006.

\bibitem{garzon}
M. Garzon, and F. Botelho, Dynamical approximation by recurrent neural networks, {\it Neurocomputing}, vol. 29, pp. 25-46, 1999.

\bibitem{jd1989}
J. D. Keeler, E. E. Pichler, J. Ross, Noise in neural networks: Thresholds, hysteresis, and neuromodulation of signal-to-noise, {\it Pro. USA}, vol. 86, no. 5, pp. 1712-1716, 1989.

\bibitem{brock}
W. A. Brock, S. N. Durlauf, Identification of binary choice models with social interactions, {\it Journal of Econometrics}, vol.140, no.1, pp. 52-75, 2007.

\bibitem{dur2010}
S. N. Durlauf, Y. M. Ioannides, Social interactions. {\it Annu. Rev. Econ.}, vol.2, no.1, pp. 451-478, 2010.

\bibitem{sto}
H. Nyberg, Forecasting the direction of the US stock market with dynamic binary probit models. {\it International Journal of Forecasting}, vol.27, no.2, pp. 561-578, 2011.

\bibitem{g1999}
T. E. Duncan, L. Guo, and B. Pasik-Duncan,
``Adaptive continuous-time linear quadratic Gaussian control", 
{\it IEEE Transactions on Automatic Control},
vol. 44, no. 9, pp. 1653-1662, Sep. 1999.

\bibitem{cdc}
L. T. Zhang, L. Guo, Adaptive identification under saturated output observations with possibly correlated and unbounded input signals, {\it in 62nd IEEE Conference on Decision and Control (CDC)}, pp. 5790-5795, 2023.

\end{thebibliography}
\end{document}